\xpatchcmd{\algorithmic}{\setcounter}{\algorithmicfont\setcounter}{}{}
\providecommand{\algorithmicfont}{}
\providecommand{\setalgorithmicfont}[1]{\renewcommand{\algorithmicfont}{#1}}
\newtheorem{fact}[theorem]{Fact}
\newcommand{\MYCOMMENT}[1]{}
\newcommand{\mylabel}[1]{\label{#1}}
\theoremstyle{plain}
\newcommand{\mycompact}[2]{\ensuremath{\langle #1(i) \rangle_{i\in[1,#2]}}}
\newcommand{\history}[1]{\ensuremath{\mathbf{#1}}\xspace}
\newcommand{\thrhist}[1]{\ensuremath{\vec{\mathbf{#1}}}}
\newcommand{\precOrder}[1]{\ensuremath{\prec_{#1}}}
\newcommand{\abOrder}[1]{\ensuremath{<_{#1}}}
\newcommand{\Evt}[1]{\ensuremath{E_{#1}}}
\newcommand{\LTS}{\ensuremath{\mathsf{LTS}}}
\newcommand{\LTSx}[1]{\ensuremath{\LTS_{#1}}}
\newcommand{\mytrace}{\ensuremath{tr}}
\newcommand{\mytracex}[1]{\ensuremath{\mytrace(#1)}}
\newcommand{\myTr}{\ensuremath{\mathsf{Tr}}}
\newcommand{\myTrx}[1]{\ensuremath{\myTr(#1)}}
\newcommand{\NULL}{\ensuremath{\mathtt{NULL}}}
\newcommand{\mypool}{\ensuremath{\mathcal{P}}}
\newcommand{\mypoolmem}{\ensuremath{\mathcal{P^?}}}
\newcommand{\myqueue}{\ensuremath{\mathcal{Q}}}
\newcommand{\mystack}{\ensuremath{\mathcal{S}}}
\newcommand{\myreg}{\ensuremath{\mathcal{R}}}
\newcommand{\myput}{\ensuremath{\mathtt{put}}\xspace}
\newcommand{\myputx}[1]{\ensuremath{\myput(#1)}\xspace}
\newcommand{\mytake}{\ensuremath{\mathtt{take}}\xspace}
\newcommand{\mytakex}[1]{\ensuremath{\mytake(#1)}\xspace}
\newcommand{\myPut}{\ensuremath{\mathtt{Put}}}
\newcommand{\myTake}{\ensuremath{\mathtt{Take}}}
\newcommand{\mymem}{\ensuremath{\mathtt{mem}}\xspace}
\newcommand{\mymemx}[2]{\ensuremath{\mymem(#1,#2)}\xspace}
\newcommand{\myMem}{\ensuremath{\mathtt{Mem}}}
\newcommand{\mydeq}{\ensuremath{\mathtt{deq}}}
\newcommand{\mydeqx}[1]{\ensuremath{\mydeq(#1)}}
\newcommand{\myenq}{\ensuremath{\mathtt{enq}}}
\newcommand{\myenqx}[1]{\ensuremath{\myenq(#1)}}
\newcommand{\myEnq}{\ensuremath{\mathtt{Enq}}}
\newcommand{\myDeq}{\ensuremath{\mathtt{Deq}}}
\newcommand{\mypush}{\ensuremath{\mathtt{push}}}
\newcommand{\mypushx}[1]{\ensuremath{\mypush(#1)}}
\newcommand{\mypop}{\ensuremath{\mathtt{pop}}}
\newcommand{\mypopx}[1]{\ensuremath{\mypop(#1)}}
\newcommand{\myPush}{\ensuremath{\mathtt{Push}}}
\newcommand{\myPop}{\ensuremath{\mathtt{Pop}}}
\newcommand{\mywr}{\ensuremath{\mathtt{wr}}}
\newcommand{\mywrx}[2]{\ensuremath{\mywr(#1,#2)}}
\newcommand{\myrd}{\ensuremath{\mathtt{rd}}}
\newcommand{\myrdx}[2]{\ensuremath{\myrd(#1,#2)}}
\newcommand{\myWr}{\ensuremath{\mathtt{Wr}}}
\newcommand{\myRd}{\ensuremath{\mathtt{Rd}}}
\title{Sequential Consistency and Concurrent Data Structures}
\titlerunning{SC Data Structures}
\author[1]{Ali Sezgin}
\affil[1]{University of Cambridge, UK\\ \texttt{as2418@cam.ac.uk}}
\authorrunning{A. Sezgin}
\subjclass{}
\keywords{Concurrency, Formal Specification, Data Structures, Sequential Consistency, Linearizability} 
\begin{document}

\maketitle

\begin{abstract}
Linearizability, the de facto correctness condition for concurrent data structure implementations, despite its intuitive appeal is known to lead to poor scalability.
This disadvantage has led researchers to design scalable data structures satisfying consistency conditions weaker than linearizability.
Despite this recent trend, sequential consistency as a strictly weaker consistency condition than linearizability has received no interest.

In this paper, we investigate the applicability of sequential consistency as an alternative correctness criterion for concurrent data structure implementations.
Our first finding formally justifies the reluctance in moving towards sequentially consistent data structures: 
Implementations in which each thread modifies only its thread-local variables are sequentially consistent for various standard data structures such as pools, queues and stacks.
We also show that for almost all data structures, and all the data structures we consider in this paper, it is possible to have sequentially consistent behaviors in which a designated thread does not synchronize at all.
As a potential remedy, we define a hierarchy of quantitatively strengthened variants of sequential consistency such that the stronger the variant the more synchronization it enforces which at the limit is equal to that enforced by linearizability.
\end{abstract}


\section{Introduction}
\mylabel{sec:introduction}
The tension between performance and correctness is well known when it comes to developing low-level library routines implementing concurrent data structures~\cite{Sha2011}.
On the one hand, scalability, the ability to fully utilize the parallelism offered by the underlying architecture and generally accepted to be the main determinant of overall performance, is adversely affected by the need to synchronize among threads.
On the other hand, correctness criteria, usually known as consistency conditions, enforce lower bounds on the amount of synchronization.

In order to break the impasse in favor of scalability research has focused on weakening the notion of correctness.
The move towards alternative consistency conditions potentially leading to better performance was initially in the domain of memory implementations. 
The goal was to replace sequential consistency~\cite{Lam1979} (SC) with weaker memory consistency models (e.g.~\cite{ABJ+1993,ANB+1995,Goo1991,HS1992,LS1988}).
Recently a similar surge has been going on relative to linearizability~\cite{HW1990}, which has hitherto been the criterion for correctness in the domain of concurrent data structures, being replaced with weaker notions of correctness (e.g.~\cite{BGY+2014,HKP+2013,JR2014}).
There have been already many scalable implementations benefiting from these weaker notions (e.g.~\cite{AKY2010,AKL+2015,HHK+2013,HKP+2013,KLP2013,RSD2014}).

In this paper, we investigate SC as an alternative relaxation of linearizability.
Linearizability requires that the effect of a method be globally visible (i.e. to other executing threads) before it completes.
SC relaxes this by requiring only thread local ordering. 
That is, if two methods $m$ and $m'$ are called by the same thread in that order, then the requirement by SC is that the effect of $m$ {\em appear to} be before that of $m'$.
Furthermore, because SC is generally accepted to be the most intuitive memory consistency model and concomitantly is very well understood and studied, it is surprising that it has received no consideration until now.
Intriguing though it may be, we show that the apparent reluctance is actually warranted.

Our investigation of SC ranges over five common data structures: two variants of pools ($\mypool$, $\mypoolmem$), queues ($\myqueue$), stacks ($\mystack$) and register (banks) ($\myreg$).
We show that for all five of them it is possible to construct SC implementations where one thread, say $t$, can arbitrarily delay its synchronization with other threads as long as the sequence of local events of $t$ satisfy a certain property, which we call {\em robustness}.
Basically a sequence of events, method calls with return values, over a data structure $\mathcal D$ is robust if the same sequence can be executed regardless of the state $\mathcal D$ is in.
For instance, enqueueing or pushing an element is a robust sequence (of length 1) in $\myqueue$ and $\mystack$, respectively.

We also show that for pool, queue and stack implementations being SC guarantees even less.
We define the class of {\em composable} data structures to which pool, queue and stack belong.
Intuitively a data structure is composable if for any pair of valid behaviors there exists an interleaving of this pair which is again a valid behavior. 
For composable data structures, an implementation in which threads do not synchronize at all is SC.
To better understand the strength of such a result, consider a typical concurrency programming problem which contains $N$ tasks to be generated by the producer threads and to be completed by the consumer threads.
If the producers are to convey their tasks to the consumers over an SC queue (pool or stack), due to lack of synchronization the program will end with the queues of producers containing tasks and consumers having done nothing at all!

As a possible remedy, we propose a natural modification to the definition of SC.
We call an implementation $k$-SC if every thread has to synchronize at least once after executing $k$ local events. 
This definition is strong enough to rule out the pathological implementations we consider in this paper.
It also naturally spans the domain of implementations between SC and linearizability via a quantitative stratification.
The smaller $k$ is, the stronger $k$-SC is which at the limit reduces to linearizability (equivalent to $0$-SC).

To summarize, we make the following contributions:
\begin{itemize}
\item Define the properties robustness and composability for data structures,
\item Prove that essentially broken SC implementations exist for data structures with either of these properties,
\item Span the range between SC and linearizability by bounding the non-synchronized event sequences.
\end{itemize}

\subsection{Related Work}
\mylabel{subsec:related-work}
Directly related to our work, there have been two other work on relaxing the notion of linearizability.
In~\cite{HKP+2013}, Henzinger et al. propose a framework in which it is possible to quantitatively relax any sequential data structure.
Their framework enables one to define a desired metric and for any data structure $\mathcal D$ defines $\mathcal{D}_k$ to be all behaviors that are at most $k$ away from some valid behavior of $\mathcal{D}$.
A concurrent behavior is associated with the set of potential sequential witnesses, as defined by linearizability, and the concurrent behavior is correct if at least one potential witness belongs to $\mathcal{D}_k$.
In contrast, our work modifies the set of potential sequential witnesses, by using SC rather than linearizability, and the concurrent behavior is correct if at least one potential witness from this extended set belongs to $\mathcal{D}$.
In~\cite{JR2014}, Jagadeesan and Riely consider quiescent consistency (QC) as an alternative relaxation to linearizability. 
They quantitatively span the range between QC and linearizability.
Similar to our work, their quantitative metrics is also defined over concurrent behaviors.
Unlike us, they do not consider whether QC allows pathological cases.
Neither work formally establishes the relation between synchronization and the characteristics of data structures as we do in this work.


\section{Notation}
\mylabel{sec:notation}
For a set $A$, let $Pow(A)$ denote the set of all subsets of $A$.
Let $\lambda x.0$ with $x$ ranging over elements of $A$ denote the function that maps all elements of $A$ to 0.
Let $A[B]$ denote the collection of functions from $A$ to $B$.
For $f\in A[B]$, $f[a\mapsto b]$ denotes the function that agrees with $f$ on $A$ except for $a$ which is mapped to $b$.
A sequence $a$ of length $n$ over some alphabet $A$ is denoted by $a(1)\cdot a(2)\ldots \cdot a(n)$.
Alternatively, we also use the notation $\langle a(i)\rangle_{i\in[1,n]}$ to denote $a$.
Let $len(a)$ denote the length of $a$.
For simplicity of presentation, unless stated explicitly to be otherwise, for every sequence $a$ and for any $1\leq i,j\leq len(a)$ we assume that $a(i)\neq a(j)$.
Let $last(a)$ denote the last symbol of $a$; i.e. $last(a)=a(len(a))$.
Let $Set(a)$ denote the set of symbols occurring in $a$; i.e. $Set(a)=\{ a(i) \mid i\in[1,len(a)] \}$.
Each sequence $a$ induces a total order over $Set(a)$, {\em appears-before} order $\abOrder{a}$, such that $i<j$ iff $a(i)\abOrder{a} a(j)$.
Let $A^*$ denote the set of all sequences over $A$, with $\varepsilon$ denoting the {\em empty} sequence.

A labelled transition system (LTS) is a tuple $\LTS=(Q,q_0,L,\rightarrow)$, where $Q$ is the set of {\em states}, $q_0$ is the {\em initial} state, $L$ is a set of labels, and $\rightarrow\subseteq (Q\times L\times Q)$ is a {\em transition relation}.
We write $q\xrightarrow{l}q'$ if $(q,l,q')\in\rightarrow$.
A {\em run} $\mathbf{r}=q_0\cdot l_1\cdot q_1\dots l_n\cdot q_n$ is an alternating sequence of states and labels such that for all $i\in[1,n]$ we have $q_{i-1}\xrightarrow{l_i}q_i$.
The {\em trace} of $\mathbf{r}$, $\mytracex {\mathbf{r}}$, is the sequence of labels occurring in $\mathbf{r}$; i.e. $\mytracex {\mathbf{r}} = \mycompact l n$.
Let $\myTrx \LTS$ denote the set of all traces of $\LTS$.

%

\subsection{Data Structures}
\label{subsec:data-structures}
A {\em data structure} $\mathcal{D}$ is a pair $(D,\Sigma_{\mathcal{D}})$, where $D$ is the {\em data domain} and $\Sigma_{\mathcal{D}}$ is the {\em method alphabet}.
For all the data structures we consider in this paper, we take $D$ to be the set of natural numbers, $\mathbb{N}$, possibly augmented with a distinguished symbol $\NULL$.
An event of $\mathcal{D}$ is a quadruple $(id,m,d_i,d_o)$, where $id\in\mathbb{N}$ is an {\em event identifier}, $m\in \Sigma_{\mathcal{D}}$ is a method, $d_i,d_o\in D$ are {\em input} and {\em output} arguments, respectively.
Intuitively, $(id, m, d_i, d_o)$ denotes the application of method $m$ with input argument $d_i$ returning the output value $d_o$.
When the input (resp. output) argument is not used in the event, we write $(id,m,\bot,d_o)$ (resp. $(id,m,d_i,\bot)$). 
We will assume that each event has a unique event identifier.
We will use $\Evt {\mathcal{D}}$ to denote the set of all events of $\mathcal{D}$.
A duplicate-free sequence over $\Evt {\mathcal{D}}$ is called a $\mathcal{D}$-behavior.
The {\em semantics} of data structure $\mathcal{D}$ is a set of $\mathcal{D}$-behaviors, each of which is called a {\em valid} behavior.
For each data structure $\mathcal{D}$, we will define a labelled transition system $\LTSx {\mathcal{D}}$ such that $\mathbf{e}\in \myTrx {\LTSx{\mathcal{D}}}$ iff $\mathbf{e}$ is a valid $\mathcal{D}$-behavior.
Below we list the data structures that we will consider in this paper.

\subsubsection{Pool, $\mypool$}
The method alphabet $\Sigma_{\mypool}$ of a pool is the set $\{\myput,\mytake\}$. 
Events of $\mypool$ are written as $\myput^{id}(x)$, short for $(id,\myput,x,\bot)$, and $\mytake^{id}(x)$, short for $(id,\mytake,\bot,x)$.
For conciseness, from this point on we will omit the superscript $id$.
Events with $\myput$ are called {\em put} events, and those with $\mytake$ are called {\em take} events.
We use $\myPut$ and $\myTake$ to denote the set of of all put and take events, respectively.

$\LTSx \mypool$ is defined as $(Pow(\mathbb{N}),\emptyset, \Evt{\mypool}, \rightarrow_{\mypool})$, where $\rightarrow_{\mypool}$ is defined as:
\begin{itemize}
\item $q\xrightarrow{\myputx x}_{\mypool} q'$ iff $q'=q\cup\{x\}$,
\item $q\xrightarrow{\mytakex x}_{\mypool}q'$ iff $x\in q$ and $q'=q\setminus\{x\}$,
\item $q\xrightarrow{\mytakex \NULL}_{\mypool}q'$ iff $q=q'=\emptyset$.
\end{itemize}

\subsubsection{Pool with Membership, $\mypoolmem$}
The method alphabet $\Sigma_{\mypoolmem}$ is $\Sigma_{\mypool}\cup \{\mymem\}$.
An event of $\mypoolmem$ is either a $\mypool$ event or of the form $\mymem(x,y)$, short for $(id,\mymem,x,y)$.
Events with $\mymem$ are called {\em query} events, and $\myMem$ denotes the set of all query events.

$\LTSx \mypoolmem$ is defined as $(Pow(\mathbb{N}),\emptyset,\Evt{\mypoolmem},\rightarrow_{\mypoolmem})$, where $\rightarrow_{\mypoolmem}$ is defined as:
\begin{itemize}
\item $q\xrightarrow{x}_{\mypoolmem} q'$ if $q\xrightarrow{x}_{\mypool}q'$,
\item $q\xrightarrow{\mymemx x y}_{\mypoolmem} q'$ iff $q=q'$, and either $y=x$ and $x\in q$, or $y=x+1$ and $x\notin q$.
\end{itemize}

\subsubsection{Queue, $\myqueue$}
The method alphabet $\Sigma_{\myqueue}$ is the set $\{\myenq,\mydeq\}$.
Events of {\myqueue} are written as $\myenq(x)$, short for $(id,\myenq,x,\bot)$, and $\mydeq(x)$, short for $(id,\mydeq,\bot,x)$.
Events with $\myenq$ are called {\em enqueue} events, and those with $\mydeq$ are called {\em dequeue} events.
We use $\myEnq$ and $\myDeq$ to denote the set of all enqueue and dequeue events, respectively.

$\LTSx \myqueue$ is defined as $(\mathbb{N}^*,\varepsilon,\Evt{\myqueue},\rightarrow_{\myqueue})$, where $\rightarrow_{\myqueue}$ is defined as:
\begin{itemize}
\item $q\xrightarrow{\myenqx x}_{\myqueue}q'$ iff $q'=q\cdot x$,
\item $q\xrightarrow{\mydeqx x}_{\myqueue}q'$ and $x\neq\NULL$ iff $q=x\cdot q'$,
\item $q\xrightarrow{\mydeqx \NULL}_{\myqueue}q'$ iff $q=q'=\varepsilon$.
\end{itemize}

\subsubsection{Stack, $\mystack$}
The method alphabet $\Sigma_{\mystack}$ is the set $\{\mypush,\mypop\}$.
Events of $\mystack$ are written as $\mypush(x)$, short for $(id,\mypush,x,\bot)$, and $\mypop(x)$, short for $(id,\mypop,\bot,x)$.
Events with $\mypush$ are called {\em push} events, and those with $\mypop$ are called {\em pop} events.
We use $\myPush$ and $\myPop$ to denote the set of all push and pop events, respectively.

$\LTSx \mystack$ is defined as $(\mathbb{N}^*,\varepsilon,\Evt{\mystack},\rightarrow_{\mystack})$, where $\rightarrow_{\mystack}$ is defined as:
\begin{itemize}
\item $q\xrightarrow{\mypushx x}_{\mystack}q'$ iff $q'=q\cdot x$,
\item $q\xrightarrow{\mypopx x}_{\mystack}q'$ and $x\neq\NULL$ iff $q=q'\cdot x$,
\item $q\xrightarrow{\mypopx \NULL}_{\mystack}q'$ iff $q=q'=\varepsilon$.
\end{itemize}

\subsubsection{Register, $\myreg$}
The method alphabet $\Sigma_{\myreg}$ is the set $\{\mywr_i,\myrd_i \mid i \in \mathbb{N}\}$.
Events of $\myreg$ are written as $\mywr_i(x)$, short for $(id,\mywr_i,x,\bot)$, and $\myrd_i(x)$, short for $(id,\myrd_i,\bot,x)$.
Events with $\mywr_i$ are called {\em write} events, and those with $\myrd_i$ are called {\em read} events.
We use $\myWr$ and $\myRd$ to denote the set of all write and read events, respectively.

$\LTSx \myreg$ is defined as $(\mathbb{N}[\mathbb{N}],\lambda x.0,\Evt{\myreg},\rightarrow_{\myreg})$, where $\rightarrow_{\myreg}$ is defined as:
\begin{itemize}
\item $q\xrightarrow{\mywrx i x}_{\myreg}q'$ iff $q'=q[i\mapsto x]$.
\item $q\xrightarrow{\myrdx i x}_{\myreg}q'$ iff $q'=q$ and $q(i)=x$.
\end{itemize}

\subsection{Histories}
\label{subsec:histories}
Each event $e=(uid,m,d_i,d_o)$ generates two {\em actions}: the {\em invocation} of $e$, written as $inv(e)$, and the {\em response} of $e$, written as $res(e)$.
We will also use $m_i^{uid}(d_i)$ and $m_r^{uid}(d_o)$ to denote the invocation and response actions, respectively.
When a particular method $m$ does not have an input (resp. output) parameter, we will write $m_i^{uid}$ (resp. $m_r^{uid}$) for the corresponding invocation (resp. response) action. 
We will also often omit the superscripts, when they are not important.
For an event set $E$, let $E_{i}$ and $E_{r}$ denote the set of all invocation and response actions generated by $E$.

A {\em $\mathcal{D}$-history} is a sequence of invocation and response actions generated by $E_{\mathcal{D}}$.
The unique identifier of each event unambigiously pairs each invocation action to a unique response action; in such a case, the actions are said to {\em match}.
We will make use of this pairing without explicitly referring to event identifiers when there is no confusion.
Similarly, we will omit $\mathcal{D}$ whenever it is either inconsequential or clear from the text.
A history $\history{h}$ is {\em well-formed} if every response action appears after its matching invocation action in $\history{h}$.
An event $e$ is {\em completed} in $\history{h}$, if both of its invocation and response actions appear in $\history{h}$. 
Formally, $e$ is completed if $e_i,e_r\in Set({\history{h}})$.
A history $\history{h}$ is {\em complete} if for all events $e$, $e_i\in Set(\history{h})$ iff $e_r\in Set(\history{h})$.
In what follows we will consider only well-formed and complete histories.

An event $e$ precedes another event $e'$ in $\history{h}$, written $e\precOrder {\history{h}} e'$, if the response action of $e$ appear before the invocation action of $e'$; i.e. $e_r\abOrder {\history{h}} e'_i$.
A history is called {\em sequential} if all invocation actions are immediately followed by their matching responses. 
Formally, the (complete) history $\history{h}$ is sequential if it is of the form $e_{1,i}\cdot e_{1,r}\cdot\ldots\cdot e_{n,i}\cdot e_{n,r}$.
We identify sequential $\mathcal{D}$-histories with $\mathcal{D}$-behaviors by mapping each matching pair of invocation and response actions to the event generating them.
A sequential history $\history{s}$ is a {\em linearization} of a history $\history{h}$, if $\history{s}$ is a permutation of $\history{h}$ such that $e\precOrder{\history{h}}e'$ implies $e\precOrder{\history{s}}e'$.

\begin{definition}[Linearizability]
A $\mathcal{D}$-history $\history{h}$ is linearizable if there exists a linearization of $\history{h}$ that is a legal $\mathcal{D}$-behavior.
A set $H$ of histories is linearizable if every $\history{h}\in H$ is linearizable.
\end{definition}

Let $T$ be the set of {\em thread id}'s.
A threaded $\mathcal{D}$-action is of the form $(t,e)$, where $t\in T$ and $e\in E_{\mathcal{D},i}\cup E_{\mathcal{D},r}$.
For a threaded action $\vec{a}=(t,a)$, we use $tid$ and $act$ to retrieve the first and second components, respectively; i.e. $tid(\vec{a})=t$ and $act(\vec{a})=a$.
Similarly, $tid$ and $act$ are point-wise extended over sequences of threaded-actions. 
For a sequence of threaded-actions $\thrhist{h}$ and a thread id $t\in T$, let $\thrhist{h}\downarrow_t$ denote the subsequence obtained by removing all threaded-actions from $\thrhist{h}$ whose first component is not equal to $t$.
A {\em threaded} $\mathcal{D}$-history is a sequence $\thrhist{h}$ over threaded actions such that 
\begin{itemize}
\item the sequence $act(\thrhist{h})$ is a complete $\mathcal{D}$-history,
\item the sequence $act(\thrhist{h}\downarrow_{t})$ is sequential for any $t\in T$.
\end{itemize}
The second condition implies that at any point in a threaded history any thread $t\in T$ can have at most one unmatched action.
For ease of presentation, we will use $\history{h}$ and $\history{h}(t)$ to denote $act(\thrhist{h})$ and $act(\thrhist{h}\downarrow_{t})$, respectively.
We will extend the properties of histories to threaded histories: a threaded history $\thrhist{h}$ is said to satisfy property $P$ if $\history{h}$ satisfies $P$.

\begin{definition}[Sequential Consistency]
A threaded $\mathcal{D}$-history is sequentially consistent if there is a sequential threaded $\mathcal{D}$-history $\thrhist{s}$ such that $\thrhist{s}$ is a permutation of $\thrhist{h}$ and for all $t\in T$ we have $\history{s}(t)=\history{h}(t)$.
\end{definition}
Intuitively, for a threaded history $\thrhist{h}$ to be sequentially consistent (SC) only the relative ordering per thread has to be respected.
Since by the definition of threaded-histories, if $(t,e)$ and $(t,e')$ are both in $\thrhist{h}$, then either $e\precOrder{\history{h}}e'$ or $e'\precOrder{\history{h}}e$, linearizability must preserve the relative ordering among events done by the same thread. 
In other words, linearizability is a stronger condition than sequential consistency.

\begin{fact}\mylabel{fact:sc-lin}
Let $\thrhist{h}$ be a threaded $\mathcal{D}$-history. 
It is sequentially consistent if it is linearizable.
\end{fact}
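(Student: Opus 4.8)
The plan is to prove the contrapositive direction directly: assuming $\thrhist{h}$ is linearizable, I will manufacture an explicit witness for sequential consistency out of the linearization, exploiting the remark recorded just before the statement, namely that the events of a single thread are totally ordered by $\precOrder{\history{h}}$ and that a linearization must preserve $\precOrder{\history{h}}$. Since $\thrhist{h}$ is linearizable, by definition its underlying history $\history{h}=act(\thrhist{h})$ admits a linearization $\history{s}$ that is a legal $\mathcal{D}$-behavior; that is, $\history{s}$ is a sequential permutation of $\history{h}$ with $e\precOrder{\history{h}}e'$ implying $e\precOrder{\history{s}}e'$. The witness $\thrhist{s}$ will simply be $\history{s}$ re-annotated with the thread identifiers already carried by $\thrhist{h}$.

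First I would re-attach the thread ids. Because every sequence is assumed duplicate-free, each action $a$ occurring in $\history{h}$ occurs in exactly one threaded action $(t,a)$ of $\thrhist{h}$; write $\tau(a)$ for that unique thread. Define $\thrhist{s}$ to be $\history{s}$ with each action $a$ replaced by $(\tau(a),a)$, so that $act(\thrhist{s})=\history{s}$. Two easy checks then discharge the structural requirements. For the permutation requirement, $\thrhist{s}$ and $\thrhist{h}$ consist of the same multiset of threaded actions, since relabelling is a bijection on actions and $\history{s}$ is a permutation of $\history{h}$. For $\thrhist{s}$ to be a sequential threaded $\mathcal{D}$-history, note $act(\thrhist{s})=\history{s}$ is a complete sequential history because a legal behavior is sequential; and for each $t$ the projection $act(\thrhist{s}\downarrow_t)$ is obtained from the sequential $\history{s}$ by deleting actions of other threads, which cannot separate a matched pair of $t$ (in $\history{s}$ each invocation is immediately followed by its matching response, both carrying the same thread id), so the projection is again sequential.

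It remains to establish the per-thread equalities $\history{s}(t)=\history{h}(t)$ for all $t$, which is the crux. Both $\history{s}(t)$ and $\history{h}(t)$ are sequential histories over exactly the events of thread $t$ (the same set, since $\history{s}$ is a permutation of $\history{h}$ and relabelling preserves which actions belong to $t$), so it suffices to show they order these events identically. Fix events $e,e'$ of thread $t$; by the remark preceding the statement they are $\precOrder{\history{h}}$-comparable, say $e\precOrder{\history{h}}e'$. Because $\history{s}$ is a linearization we get $e\precOrder{\history{s}}e'$, hence $e$ precedes $e'$ in $\history{s}(t)$ as well; as both projections are sequential, the agreement of event order forces $\history{s}(t)=\history{h}(t)$. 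Thus $\thrhist{s}$ is a sequential threaded history, a permutation of $\thrhist{h}$, matching every per-thread projection, so it witnesses that $\thrhist{h}$ is sequentially consistent. I expect the main obstacle to lie entirely in this last paragraph: the argument must combine both that a linearization preserves $\precOrder{\history{h}}$ and that within one thread $\precOrder{\history{h}}$ is total, so that no reordering of a thread's own events is ever permitted; the relabelling and the structural checks are routine bookkeeping by comparison.
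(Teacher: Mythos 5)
Your proof is correct and follows essentially the same route as the paper: the paper justifies this fact by the remark immediately preceding it, namely that each thread's events are totally ordered by $\precOrder{\history{h}}$ and a linearization preserves $\precOrder{\history{h}}$, hence the linearization itself (with thread ids re-attached) witnesses sequential consistency. You have simply spelled out in full the bookkeeping that the paper leaves implicit.
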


As far as the implementation of data structures is concerned, we will not specify a particular programming language.
We will assume that each method in the method alphabet has an accompanying procedure.
An {\em execution trace} is a sequence of instruction labels coupled with thread identifiers executing the instruction.
For instance, $(t:i)$ denotes the execution of instruction with the unique label $i$ by thread $t$.
An instruction label is the {\em entry} point of method $m$, written $enter(m)$, if it is the label of the first instruction of $m$.
Similarly, an instruction label is an {\em exit} point of $m$, written $exit(m)$, if it is the lable of an instruction that completes the execution of $m$.
Each execution trace $\tau$ induces a history $h(\tau)$ which is obtained by replacing each $(t : enter(m))$ with $m_i^{t_{uid}}(d_i)$, each $(t : exit(m))$ with $m_r^{t_{uid}}(d_o)$, and removing the remaining (intermediate) symbols. 
We assume that states of an execution trace contain enough information to deduce the values of $d_i$ and $d_o$ associated with each entry and exit point.
An execution trace is {\em complete} if its induced history is complete.
For an execution trace $\tau$ and some $t\in T$, let $\tau\downarrow_{t}$ denote the execution trace obtained by retaining only the symbols due to $t$ (symbols of the form $(t : i)$.
An implementation is identified with the set of execution traces it generates.
When clear from the context we will refer to the induced history of an execution trace as a history of the implementation.

\section{Properties of Data Structures}
\mylabel{sec:data-structure-properties}
Our main result is that sequential consistency is too weak because the class of SC implementations includes {\em bad} ones.
In order to generalize our result, we abstract away irrelevant specifics of data structures and extract what seems to be the essential property that causes SC  implementations to misbehave.
We identify two properties: composability and robustness.
A data structure is composable if any two valid behavior can be interleaved in such a way that the result is also a valid behavior.
Robustness means that the data structure has events that are state independent.
In this section, we formalize these notions.  

\begin{definition}[Composable]
Let $\history{e}$ and $\history{f}$ be two valid $\mathcal{D}$-behaviors.
They are called {\em composable} if there exist two partitionings $\history{e}=\history{e}_1\ldots\history{e}_k$ and $\history{f}=\history{f}_1\ldots\history{f}_k$ such that the behavior $\history{e}_1\history{f}_1\ldots\history{e}_k\history{f}_k$ is a valid $\mathcal{D}$-behavior. 
The data structure $\mathcal{D}$ is called {\em composable} if any pair of valid $\mathcal{D}$-behaviors are composable.
\end{definition}
Informally, two valid $\mathcal{D}$-behaviors are composable if it is possible to interleave them to obtain another valid $\mathcal{D}$-behavior. 
We now proceed with a series of results, establishing composability of the data structures we consider in this paper.

\begin{lemma}[Pool and Composability]
The pool data structure $\mypool$ is composable.
\end{lemma}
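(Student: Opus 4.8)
The plan is to isolate the single genuine obstruction to gluing two pool behaviours together --- the emptiness requirement of $\mytakex{\NULL}$ events --- and to arrange the interleaving so that the two behaviours only ever meet when the pool is globally empty, which simultaneously kills any spurious interaction between colliding values.

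First I would prove an \emph{extension lemma}: if a valid $\mypool$-behaviour $\history{g}$ contains no $\mytakex{\NULL}$ event, then $\history{g}$ is executable from \emph{any} state $q\in Pow(\mathbb{N})$, not merely from $\emptyset$. This follows by induction along $\history{g}$, maintaining the invariant that the state $q'$ reached from $q$ always contains the state $p'$ reached from $\emptyset$ after the same prefix: a $\myputx{x}$ adds $x$ to both and preserves $q'\supseteq p'$, while a $\mytakex{x}$ with $x\neq\NULL$ is enabled in $q'$ because $x\in p'\subseteq q'$, and deleting $x$ from both sides again preserves containment. Since there are no $\mytakex{\NULL}$ steps, every transition stays enabled, so $\history{g}$ runs to completion from $q$.

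Next, for a valid behaviour $\history{e}$ let its \emph{last empty point} be the longest prefix after which the $\LTSx{\mypool}$-state is $\emptyset$, and write $\history{e}=\history{e}_1\history{e}_2$ with $\history{e}_1$ that prefix. Then $\history{e}_1$ is valid and ends in $\emptyset$, and $\history{e}_2$ contains no $\mytakex{\NULL}$ event, since any such event would force the pool empty at a strictly later position, contradicting maximality of $\history{e}_1$. Decompose $\history{f}=\history{f}_1\history{f}_2$ the same way. I would then claim that the interleaving $\history{e}_1\history{f}_1\history{e}_2\history{f}_2$ (so $k=2$) is valid, and verify it block by block starting from $\emptyset$: $\history{e}_1$ returns the pool to $\emptyset$; then $\history{f}_1$ executes from a genuinely empty pool exactly as in isolation, so all its $\mytakex{\NULL}$ events are satisfied and it again ends in $\emptyset$; $\history{e}_2$ then runs from $\emptyset$ as in isolation; and finally $\history{f}_2$, being $\mytakex{\NULL}$-free, is executable from the (possibly nonempty) residue left by $\history{e}_2$ by the extension lemma.

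The delicate point, and the one the whole construction is designed around, is value collision: because states are sets, a $\mytake$ performed by one behaviour can delete a value that the other still needs, which is exactly what breaks naive concatenation $\history{e}\history{f}$. Splitting each behaviour at its last empty point guarantees that the first three blocks each start from a truly empty pool, so no value put by an earlier block is ever present to be stolen, and the extension lemma then absorbs the single remaining block $\history{f}_2$, whose freedom from $\mytakex{\NULL}$ is precisely the hypothesis the lemma needs. Degenerate cases, such as a behaviour that ends in $\emptyset$ (so $\history{e}_2=\varepsilon$) or never revisits $\emptyset$ after its first $\myputx{x}$ (so $\history{e}_1=\varepsilon$), are subsumed by the same argument.
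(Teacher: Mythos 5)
Your proof is correct and takes essentially the same route as the paper's: both arguments split $\history{e}$ and $\history{f}$ at a point where the pool is provably empty (the paper cuts at the last $\mytakex{\NULL}$ event, you at the last empty state --- equivalent for this purpose), use the same four-block interleaving $\history{e}_1\history{f}_1\history{e}_2\history{f}_2$, and justify the final block by the observation that a $\mytakex{\NULL}$-free suffix starting from $\emptyset$ can be replayed from any state. Your extension lemma with the superset invariant is a cleaner, more rigorous packaging of the inline induction the paper performs on $\history{f}_2$, but it is the same underlying monotonicity idea.
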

\begin{proof}
Let $\history{e}$ and $\history{f}$ be two valid $\mypool$-sequences.
Let $\history{e}_1$ and $\history{f}_1$ be the maximal prefixes of $\history{e}$ and $\history{f}$, respectively, such that $last(\history{e}_1)=\mytakex \NULL$ and $last(\history{f}_1)=\mytakex \NULL$. 
Let $\history{e}_2$ and $\history{f}_2$ be the remaining suffixes of $\history{e}$ and $\history{f}$.
That is, $\history{e}=\history{e}_1\history{e}_2$ and $\history{f}=\history{f}_1\history{f}_2$. 
We claim that $\history{g}=\history{e}_1\history{f}_1\history{e}_2\history{f}_2$ is a valid $\mypool$-behavior.
This is equivalent to showing that there is a run of $\LTSx {\mypool}$ whose trace is $\history{g}$.
By the validity of $\history{e}$, we know that there is a run $\mathbf{r}_1$ of $\LTSx {\mypool}$ with trace $\history{e}_1$ because $\myTrx {\LTSx \mypool}$ is prefix-closed.
Furthermore, by the assumption that $last(\history{e}_1)=\mytakex \NULL$, that run ends at the state $\emptyset$.
Similarly, there is a run $\mathbf{r}_2$ with trace $\history{f}_1$.
Since $\mathbf{r}_1$ ends at $\emptyset$, $\mathbf{r}_1\mathbf{r}_2$ is also a run of $\LTSx {\mypool}$ whose trace is $\history{e}_1\history{f}_1$.
Since the trace $\history{e}_2$ belongs to a run $\mathbf{r}_3$ (implying that $\mathbf{r}_1\mathbf{r}_3$ is the run with trace $\history{e}$) which starts at $\emptyset$, $\mathbf{r}_1\mathbf{r}_2\mathbf{r}_3$ is a run of $\LTSx {\mypool}$.
Finally, we have to extend $\mathbf{r}_1\mathbf{r}_2\mathbf{r}_3$ with a run $\mathbf{r}_4$ whose trace is $\history{f}_2$.
We proceed by induction on the length of $\history{f}_2$.
The base case, $\history{f}_2=\varepsilon$ is trivial.
Assume that the next event is $\myputx x$. 
By the definition of $\LTSx {\mypool}$, such a transition is enabled at all states.
Assume that the next event is $\mytakex x$ (note that by construction, $\history{f}_2$ does not contain a transition with label $\mytakex {\NULL}$).
Because the run with trace $\history{f}_2$ starts at the state $\emptyset$, there must have been an event $\myputx x$ in $\history{f}_2$ and that no event since the last occurrence in $\history{f}_2$ prior to $\mytakex x$ was equal to $\mytakex x$.
In other words, we must have $x\in q$, where $q$ is the current state, which means that it is possible to extend the run with $\mytakex x$.
\end{proof}

\begin{lemma}[Stack and Composability]
The stack data structure $\mystack$ is composable.
\end{lemma}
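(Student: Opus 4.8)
The plan is to mirror the composability argument for \mypool, using the same split into $k=2$ parts, but replacing the pool's ``bag'' reasoning in the final step with a last-in-first-out (LIFO) invariant. Given two valid \mystack-behaviors $\history{e}$ and $\history{f}$, I would let $\history{e}_1$ and $\history{f}_1$ be the maximal prefixes of $\history{e}$ and $\history{f}$ whose last event is $\mypopx{\NULL}$ (taking the prefix to be $\varepsilon$ if no such event occurs), and let $\history{e}_2,\history{f}_2$ be the corresponding suffixes, so that $\history{e}=\history{e}_1\history{e}_2$ and $\history{f}=\history{f}_1\history{f}_2$. By maximality, neither $\history{e}_2$ nor $\history{f}_2$ contains a $\mypopx{\NULL}$ event. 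The claim is that $\history{g}=\history{e}_1\history{f}_1\history{e}_2\history{f}_2$ is a valid \mystack-behavior, which again amounts to exhibiting a run of $\LTSx{\mystack}$ with trace $\history{g}$.

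I would reuse the first three segments essentially verbatim from the pool case, since the only property used there is that the stack is anchored at the empty state $\varepsilon$. As $\myTrx{\LTSx{\mystack}}$ is prefix-closed there is a run $\mathbf{r}_1$ with trace $\history{e}_1$, and because $last(\history{e}_1)=\mypopx{\NULL}$ (or $\history{e}_1=\varepsilon$) this run ends at $\varepsilon$; similarly a run $\mathbf{r}_2$ with trace $\history{f}_1$ starts and ends at $\varepsilon$, so $\mathbf{r}_1\mathbf{r}_2$ is a run ending at $\varepsilon$. The suffix $\history{e}_2$ is the trace of a run $\mathbf{r}_3$ starting at $\varepsilon$ (the state reached in $\history{e}$ right after $\history{e}_1$), so $\mathbf{r}_1\mathbf{r}_2\mathbf{r}_3$ is a run whose trace is $\history{e}_1\history{f}_1\history{e}_2$ and which ends at the final state $u$ of $\history{e}$.

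The crux, and the only place that departs from the pool argument, is extending this run by $\history{f}_2$ on top of the possibly non-empty stack $u$. I would isolate this as an auxiliary claim: if $\mathbf{s}$ is a valid \mystack-behavior (run from $\varepsilon$, ending at state $z$) containing no $\mypopx{\NULL}$ event, then for every stack state $w$ there is a run from $w$ with trace $\mathbf{s}$ ending at $w\cdot z$. I expect to prove this by induction on $len(\mathbf{s})$: a push appends the same symbol in both contexts; for a pop $\mypopx x$ with $x\neq\NULL$, the absence of $\mypopx{\NULL}$ forces the isolated state just before it to be non-empty, say $z'\cdot x$, so the composed state is $w\cdot z'\cdot x$, whose top is still $x$, and the pop is enabled regardless of $w$. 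Applying the claim with $\mathbf{s}=\history{f}_2$ and $w=u$ yields the final run $\mathbf{r}_4$ completing $\history{g}$. The hard part is getting right exactly why LIFO does not break here: a $\mypopx{\NULL}$ would require the \emph{entire} composed stack, including the $u$-layer at the bottom, to be empty, which it is not; splitting at the \emph{last} empty-pop is precisely what guarantees $\history{f}_2$ issues no such event, so every pop it makes stays within its own pushes and never reaches the elements inherited from $\history{e}$ and $\history{f}_1$.
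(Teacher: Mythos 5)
Your proof is correct and follows essentially the same approach as the paper's sketch: the same decomposition into maximal prefixes ending in $\mypopx{\NULL}$, the same interleaving $\history{e}_1\history{f}_1\history{e}_2\history{f}_2$, and the same key observation that $\history{f}_2$ never pops past its own pushes. Your auxiliary claim (runs without $\mypopx{\NULL}$ can be replayed on top of any stack state $w$, maintaining the invariant that the composed state is $w$ concatenated with the isolated state) is exactly the formalization of the intuition the paper leaves implicit, so it is a welcome elaboration rather than a departure.
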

\begin{proof}[Proof (Sketch)]
The construction is similar to the one given in the proof of $\mypool$.
For any two valid $\mystack$-behaviors $\history{e}$ and $\history{f}$, we take the maximal prefixes $\history{e}_1$ and $\history{f}_1$ both of which end with $\mypopx {\NULL}$. 
Then, the composed behavior $\history{e}_1\history{f}_1\history{e}_2\history{f}_2$ is also a valid $\mystack$-behavior, where $\history{e}_2$ and $\history{f}_2$ are the remaining suffixes of $\history{e}$ and $\history{f}$.
Intuitively, the constructed behavior is valid because neither $\history{e}_2$ nor $\history{f}_2$ reaches beyond what they have pushed onto the stack (no appearance of $\mypopx {\NULL}$ in either of the two). 
\end{proof}

\begin{lemma}[Queue and Composability]
The queue data structure $\myqueue$ is composable.
\end{lemma}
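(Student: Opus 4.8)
The plan is to follow the template of the pool and stack lemmas but to confront an essential new difficulty created by the FIFO discipline. First I would try the literal analogue of the previous proofs: split $\history{e}$ and $\history{f}$ at their last $\mydeqx{\NULL}$ into $\history{e}_1\history{e}_2$ and $\history{f}_1\history{f}_2$, and attempt the interleaving $\history{e}_1\history{f}_1\history{e}_2\history{f}_2$. The prefix $\history{e}_1\history{f}_1$ is fine, since both $\history{e}_1$ and $\history{f}_1$ run from $\varepsilon$ to $\varepsilon$, and appending $\history{e}_2$ is fine since it is just the tail of $\history{e}$ run from $\varepsilon$. The place where this breaks --- and the crux of the whole lemma --- is the final block $\history{f}_2$: whatever $\history{e}_2$ has enqueued but not dequeued (its \emph{residue}) now sits at the \emph{front} of the queue, so the first $\mydeqx{x}$ of $\history{f}_2$ returns a residue element of $\history{e}_2$ rather than $\history{f}_2$'s own element. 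This is exactly where a queue differs from a pool (where \mytake may return any element) and from a stack (where LIFO keeps each behavior's fresh pushes on top of the other's residue).

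Since a coarse two-block split cannot work, the two behaviors must be interleaved finely. I would first reduce the problem by decomposing each behavior at every configuration where the queue is empty, writing $\history{e}=X_1\cdots X_m Y$ where each $X_j$ is an \emph{excursion} that both starts and ends at $\varepsilon$ and $Y$ is the final \emph{residual block} (starting at $\varepsilon$, never returning to $\varepsilon$, ending with nonempty residue), and similarly for $\history{f}$. Excursions chain together trivially, because each leaves the queue empty for the next; hence the entire difficulty concentrates in interleaving the two residual blocks.

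For the residual blocks --- more generally, any two behaviors that run from $\varepsilon$ and never execute $\mydeqx{\NULL}$ --- I would observe that a merge is never in danger of underflow (at every prefix the number of dequeues is at most the number of enqueues, since this holds for each block separately and the counts add), so validity reduces to the FIFO matching condition that the $n$-th dequeued value equals the $n$-th enqueued value. The guiding principle is that in the merged enqueue order every element that is eventually dequeued must precede every residue element; then residue sits permanently at the tail, can never block a dequeue, and FIFO matching holds automatically. This is plausible because inside each block the dequeued enqueues form a prefix of the enqueues and the residue enqueues a suffix. I would realize such an interleaving by a greedy simulation of the two blocks over one shared queue: whenever a block's next action is a dequeue whose demanded value equals the current front, perform it; otherwise advance an enqueue, deferring residue enqueues as long as possible. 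The partition $\history{e}_1\history{f}_1\cdots\history{e}_k\history{f}_k$ demanded by the definition is then read off from the maximal same-block runs of the resulting sequence.

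The main obstacle is proving that this simulation never deadlocks. The key lemma is that each block's next demanded dequeue value is precisely that block's oldest still-enqueued element, so the global front always coincides with the demand of one of the two blocks --- hence some move is always legal --- \emph{provided} no residue element has been allowed to reach the front. Guaranteeing this last proviso is the delicate point, because a block's internal order may force a residue enqueue to appear before that block's own later dequeue; one must schedule the two blocks so that neither is exhausted (dumping its residue to the front) while the other still has pending dequeues. I expect to discharge this by an induction on the combined length of the two blocks, peeling off a block that ends in an enqueue (its last event is residue and can simply be appended) and treating separately the case in which both blocks end in a dequeue.
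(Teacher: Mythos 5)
Your reduction to residual blocks, and your key lemma that a block's next demanded dequeue value is its oldest still-enqueued element, are both correct and in fact mirror the first step of the paper's construction (the paper splits $\history{e}$ into the maximal prefix $\history{e}_1$ ending at state $\varepsilon$ and the remaining suffix, exactly your excursions-plus-residual-block decomposition). The gap is the step you yourself flag as delicate, and it is fatal as stated: the greedy scheduler ``perform an enabled dequeue, otherwise advance an enqueue, deferring residue enqueues as long as possible'' can deadlock, and the bad move cannot be recognized from the current configuration and next actions. Take the residual blocks
\[
\history{e} = \myenqx 1 \cdot \myenqx 2 \cdot \mydeqx 1, \qquad
\history{f} = \myenqx 3 \cdot \myenqx 4 \cdot \mydeqx 3 \cdot \myenqx 5 \cdot \myenqx 6 \cdot \mydeqx 4 \cdot \mydeqx 5,
\]
with residues $2$ and $6$. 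Your rules permit the opening $\myenqx 1 \cdot \myenqx 3 \cdot \myenqx 4$ (each move advances a non-residue enqueue while no dequeue is enabled). From there every move is forced: $\history{f}$ is stuck at $\mydeqx 3$ (the front is $1$), so $\history{e}$ must play its residue $\myenqx 2$ and then $\mydeqx 1$, exhausting $\history{e}$; then $\history{f}$ plays $\mydeqx 3 \cdot \myenqx 5 \cdot \myenqx 6 \cdot \mydeqx 4$ and is stuck forever at $\mydeqx 5$, because the dead residue $2$ sits at the front. The deadlock was thus unavoidable after the third move; preventing it requires refusing to play $\myenqx 1$ until $\history{f}$ has enqueued $5$, a decision that depends on the entire remainder of both blocks.

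Your fallback induction does not close this hole either. Peeling a final enqueue is sound, but in the case you defer --- both blocks ending in a dequeue --- the plain inductive hypothesis (``some valid merge exists'') is too weak: after peeling $\mydeqx a$ and $\mydeqx b$, the values $a$ and $b$ become residue of the truncated blocks, and to append the two dequeues you must control exactly where these two particular residue elements end up in the final queue; so the hypothesis must already prescribe the relative order of residue enqueues \emph{across} the two blocks. That cross-block ordering is what the paper's proof supplies. It partitions each residual block into \emph{eras} indexed backwards from the end ($Era_1$ starts at the first never-dequeued enqueue, and every enqueue of $Era_{j+1}$ is dequeued inside $Era_j$), and concatenates the eras pairwise in decreasing index, aligning the two blocks at their final eras, with the deeper block's era first in each pair. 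On the example above this yields the valid merge $\myenqx 3 \cdot \myenqx 4 \cdot \mydeqx 3 \cdot \myenqx 5 \cdot \myenqx 1 \cdot \myenqx 6 \cdot \mydeqx 4 \cdot \mydeqx 5 \cdot \myenqx 2 \cdot \mydeqx 1$: the deeper block $\history{f}$ goes first, which is precisely the global foresight your scheduler lacks. To salvage your argument, replace the greedy simulation by this backward-indexed decomposition (your ordering principle then follows by induction on era depth); purely local scheduling cannot work.
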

\begin{proof}[Proof (Sketch)]
Unlike the previous two cases, the construction for queue is more involved.
We will need to partition a valid $\myqueue$-behavior according to the relative ordering between the enqueue events of $\myEnq$ and the dequeue events of $\myDeq$.
For any $\myenqx x$, let $\mydeqx x$ be called its {\em observer}.
Given a valid $\myqueue$-behavior $\history{e}$, we define $Era_j(\history{e})$ inductively as follows:
\begin{itemize}
\item $Era_0(\history{e})=\varepsilon$.
\item $Era_{j+1}(\history{e})$ the maximum segment that begins with an enqueue event whose observer is in $Era_{j}(\history{e})$ and extends until the first element that belongs to $Era_{j}(\history{e})$.
\end{itemize}
To illustrate the definition, consider the valid $\myqueue$-behavior:
\[
\history{e} \quad = \quad \myenqx 1\cdot \myenqx 2\cdot \mydeqx 1 \cdot \myenqx 3 \cdot \mydeqx 2 \cdot \myenqx 4 \cdot \myenqx 5 \cdot \mydeqx 3 \cdot \mydeqx 4 
\]
Then, we have 
\begin{eqnarray*}
Era_1(\history{e}) &=& \myenqx 5 \cdot \mydeqx 3 \cdot \mydeqx 4\\
Era_2(\history{e}) &=& \myenqx 3 \cdot \mydeqx 2 \cdot \myenqx 4\\
Era_3(\history{e}) &=& \myenqx 2 \cdot \mydeqx 1\\
Era_4(\history{e}) &=& \myenqx 1
\end{eqnarray*}
Observe that by construction all enqueue events of $Era_j(\history{e})$ for $j>1$ are observed by the dequeue events of $Era_{j-1}(\history{e})$. 
By convention, for any $n>k$ such that $Era_k(\history{e})\ldots Era_1(\history{e}) = \history{e}$, we set $Era_n(\history{e})=\varepsilon$.

Now let $\history{e}$ and $\history{f}$ be two valid $\myqueue$-behaviors.
Let $\history{e}_1$ be the maximal prefix of $\history{e}$ such that the run with trace $\history{e}_1$, which necessarily exists, ends at state $\varepsilon$ (the initial state of $\LTSx {\myqueue}$).
Let $\history{e}_r$ be the remaining suffix of $\history{e}$; i.e. $\history{e}=\history{e}_1\history{e}_r$.
Let $\history{f}_1$ and $\history{f}_r$ be defined similarly for $\history{f}$.
Now construct the $Era$ sequences for $\history{e}_r$ and $\history{f}_r$. 
Let $j_e$ be the maximal index of a non-empty era sequence of $\history{e}$, $j_f$ be the index for $\history{f}$.
Without loss of generality, assume that $j_e\geq j_f$.
Then, the interleaving
\[
\history{e}_1\cdot \history{f}_1\cdot Era_{e_j}(\history{e})\cdot Era_{e_j}(\history{f}) \ldots Era_1(\history{e})\cdot Era_1(\history{f})
\]
is a valid {\myqueue}-behavior.
To illustrate the construction, consider another valid $\myqueue$-behavior
\[
\history{f} \quad = \quad \myenqx 6 \cdot \mydeqx 6 \cdot \myenqx 7 \cdot \myenqx 8 \cdot \mydeqx 7 \cdot \myenqx 9 \cdot \myenqx {10} \cdot \mydeqx 8
\]
Then, $\history{f}_1$ is $\myenqx 6 \cdot \mydeqx 6$ and the era sequences for the suffix $\history{f}_r$ is
\begin{eqnarray*}
Era_1(\history{f}) &=& \myenqx 9 \cdot \myenqx {10} \cdot \mydeqx 8\\
Era_2(\history{f}) &=& \myenqx 8 \cdot \mydeqx 7\\
Era_3(\history{f}) &=& \myenqx 7\\
\end{eqnarray*}
Finally, the interleaving for $\history{e}$ and $\history{f}$ is given as:
\begin{multline*}
\myenqx 6 \cdot \mydeqx 6 \cdot \myenqx 1 \cdot \myenqx 2 \cdot \mydeqx 1 \cdot \myenqx 7 \cdot \myenqx 3 \cdot \mydeqx 2 \cdot \\
\myenqx 4 \cdot \myenqx 8 \cdot \mydeqx 7 \cdot \myenqx 5 \cdot \mydeqx 3 \cdot \mydeqx 4 \cdot \myenqx 9 \cdot \myenqx {10} \cdot \mydeqx 8
\end{multline*}
which is a valid $\myqueue$-behavior.
Intuitively, the construction works as each era sequence of one behavior (say $\history{e}$) with index $j$ removes all the elements inserted by the most recent era sequence of the same behavior with index $j+1$, thereby making the insertions of this behavior ($\history{e}$) invisible to the other behavior ($\history{f}$) and vice versa. 
\end{proof}
The other two remaining data structures, $\mypoolmem$ and $\myreg$, are not composable.
For the pool with membership data structure $\mypoolmem$, the following two valid $\mypoolmem$-behaviors have no interleaving that is a valid $\mypoolmem$-behavior:
\[
\history{e} = \myputx 1 \cdot \mymemx 2 3, \quad \history{f} = \myputx 2 \cdot \mymemx 1 2
\]
since $\mymemx 2 3$ has to come before $\myputx 2$ and $\mymemx 1 2$ has to come before $\myputx 1$, both conditions of which cannot be simultaneously satisfied.

Similarly, for the register data structure $\myreg$, the following valid $\myreg$-behaviors are not composable:
\[
\history{e} = \mywrx 1 1 \cdot \myrdx 2 0, \quad \history{f} = \mywrx 2 1 \cdot \myrdx 1 0
\]
The next property, robustness, is satisfied by all five of the data structures we consider and to the best of our knowledge all data structures are robust.
\begin{definition}[Robustness]
Let $\history{e}$ be a valid $\mathcal{D}$-behavior.
It is {\em robust} if for any state $q\in \LTSx {\mathcal D}$, there is a run that starts at $q$ with trace $\history{e}$.
A data structure is robust if it contains at least one robust sequence.
\end{definition}
In general, any data structure which contains a {\em total} event (e.g. $\myenqx x$ of $\myqueue$ or $\mypushx x$ of $\mystack$) is robust.
\begin{lemma}\mylabel{lem:all-singular}
All of $\mypool$, $\mypoolmem$, $\myqueue$, $\mystack$, $\myreg$ are robust.
\end{lemma}
\begin{proof}
The event $\mytakex x$ is enabled at every state of $\LTSx {\mypool}$ and $\LTSx {\mypoolmem}$.
The event $\myenqx x$ is enabled at every state of $\LTSx {\myqueue}$.
The event $\mypushx x$ is enabled at every state of $\LTSx {\mystack}$.
The event $\mywrx x y$ is enabled at every state of $\LTSx {\myreg}$.
\end{proof}
Robust sequences can contain arbitrary events or be restricted to a subset of events, depending on the data structure.
\begin{lemma}
A robust sequence of $\mypool$ and $\mypoolmem$ cannot contain $\myputx \NULL$; for $\myqueue$, it cannot contain $\mydeqx \NULL$; for $\mystack$, it cannot contain $\mypopx \NULL$. 
There is no restriction on robust sequences for $\myreg$.
\end{lemma}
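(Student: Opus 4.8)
The plan is to notice that in each of $\mypool,\mypoolmem,\myqueue,\mystack$ there is exactly one event whose transition rule can fire \emph{only} when the structure is empty: for $\mypool$ and $\mypoolmem$ this is $\mytakex\NULL$ (the take witnessing emptiness, whose rule forces $q=q'=\emptyset$), for $\myqueue$ it is $\mydeqx\NULL$, and for $\mystack$ it is $\mypopx\NULL$; by contrast $\myreg$ has no such \emph{empty-observing} event. (Read literally, the pool clause $\myputx\NULL$ is immediate, since $\NULL\notin\mathbb{N}$ means $q\cup\{\NULL\}\notin Pow(\mathbb{N})$ and the transition has no valid target, so $\myputx\NULL$ is never enabled; the substantive content, parallel to the $\myqueue$ and $\mystack$ clauses, concerns $\mytakex\NULL$.) To rule such an event out of a robust sequence I will use a \emph{fresh witness} argument. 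Let $\history{e}$ be any candidate robust sequence. Since $\history{e}$ is finite it mentions only finitely many values, so I may pick a value $z$ appearing as the argument of no event in $\history{e}$. I then seed the start state with $z$ (take $q=\{z\}$ for the pools and $q=z$, the one-element word, for the queue and the stack) and show that $z$ can never leave the structure, so that the state is non-empty at every point of every run from $q$; consequently the empty-observing event can never be enabled along such a run, and $\history{e}$ fails to be robust.

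The heart of the argument is the invariant that $z$ is present in the current state at every step of any run of $\LTSx{\mathcal D}$ that starts from the seeded state, which I prove by induction on the length of the run. The adding transitions $\myputx{\cdot}$, $\myenqx{\cdot}$, $\mypushx{\cdot}$ only enlarge the state (and the query transition $\mymemx{\cdot}{\cdot}$ of $\mypoolmem$ leaves it unchanged), so they preserve presence of $z$. The removing transitions are the only threat, and here I use that no argument of $\history{e}$ equals $z$: for $\mypool$ and $\mypoolmem$ the rule for $\mytakex x$ deletes exactly the value $x\neq z$, so $z$ survives; for $\myqueue$ and $\mystack$ the rule for $\mydeqx x$ (resp.\ $\mypopx x$) is enabled only when the head (resp.\ top) equals $x\neq z$, so in a genuine run every removal discards a non-$z$ element and $z$ is never the one removed. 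Hence $z$ persists, the state is never $\emptyset$ (resp.\ never $\varepsilon$), and a transition labelled $\mytakex\NULL$, $\mydeqx\NULL$ or $\mypopx\NULL$, each enabled only at the empty state, can occur nowhere in the run. This contradicts the existence of a run with trace $\history{e}$ from $q$, establishing the three negative claims.

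For $\myreg$ I argue the positive ``no restriction'' claim by exhibiting, for each event, a robust sequence containing it. A write $\mywrx i x$ is total, so the one-event sequence $\mywrx i x$ is already robust; a read $\myrdx i x$ is enabled only when location $i$ holds $x$, but $\mywrx i x\cdot\myrdx i x$ is robust, since from an arbitrary state the write first forces location $i$ to $x$. As writes and reads exhaust $\Sigma_{\myreg}$, no event is excluded. The same idea shows that for $\mypool$, $\myqueue$ and $\mystack$ the empty-observing event is the \emph{only} forbidden one: for instance $\myputx x\cdot\mytakex x$, $\myenqx x\cdot\mydeqx x$ and $\mypushx x\cdot\mypopx x$ are robust. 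I expect the one delicate point to be the immovability invariant for $\myqueue$ and $\mystack$, where deletion is position-sensitive rather than value-sensitive; the argument above closes this gap by noting that along any fully enabled run a dequeue or pop fires only when its argument matches the exposed element, which is never $z$. Everything else follows directly from the transition rules.
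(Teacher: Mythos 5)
The paper states this lemma without any proof, so there is nothing to compare your argument against; judged on its own merits, your proof of the lemma as stated is correct. The fresh-witness argument --- seed the start state with a value $z$ that occurs nowhere in $\history{e}$, show by induction along the run that $z$ can never be removed, and conclude that the empty-witnessing labels $\mytakex\NULL$, $\mydeqx\NULL$, $\mypopx\NULL$ are never enabled, contradicting robustness --- establishes all three negative claims, and the sequences $\mywrx{i}{x}$ and $\mywrx{i}{x}\cdot\myrdx{i}{x}$ settle the register clause under the natural reading of ``no restriction'' (every event of $\Sigma_{\myreg}$ occurs in some robust sequence). You are also right that the pool clause, read literally as $\myputx\NULL$, is vacuous because that label is enabled at no state of $\LTSx{\mypool}$, and that the substantive content parallel to the queue and stack clauses is $\mytakex\NULL$.

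There is, however, a genuine error in your closing remark, which asserts that for $\myqueue$ the empty-observing event is the \emph{only} forbidden one because $\myenqx{x}\cdot\mydeqx{x}$ is robust. It is not: from a non-empty state $q$ whose head is some $y\neq x$, the state after $\myenqx{x}$ is $q\cdot x$, whose head is still $y$, so $\mydeqx{x}$ is disabled. Indeed your own invariant proves the opposite, stronger fact: starting from the seeded state $q=z$, the fresh value $z$ is not merely present but stays \emph{at the head} (enqueues append at the tail, and no dequeue can fire while the head is $z$), so no dequeue event of any value can occur in a run from $q$. Hence robust $\myqueue$-sequences contain no dequeue events at all --- they are exactly the enqueue-only sequences. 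This does not damage your proof of the lemma itself, which only claims that $\mydeqx\NULL$ is excluded (true a fortiori), and the analogous side claims for $\mypool$ and $\mystack$ are fine, since $\mytake$ is value-addressed and $\mypop$ removes precisely the element just pushed; but the minimality claim for the queue should be deleted or corrected.
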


\newcommand{\myimpliso}{\ensuremath{\mathsf{Imp}_{iso}}}
\newcommand{\myimplisox}[1]{\ensuremath{\myimpliso(#1)}}
\newcommand{\myimplclose}{\ensuremath{\mathsf{Imp}_{sing}}}
\newcommand{\myimplclosex}[1]{\ensuremath{\myimplclose(#1)}}

\newcommand{\myObjx}[1]{\ensuremath{\mathtt{Obj}_{#1}}}

\section{SC is too Weak}
\mylabel{sec:sc-weak}
In this section, we present the bad implementations that SC seems to allow.
These bad implementations come in two variants: conditional and unconditional non-synchronization.
We show that all robust data structures allow conditional non-sychronization.
Unconditional non-synchronization, arguably the worst of the two, is allowed by composable data structures ($\mypool,\myqueue,\mystack$).

Let us call a label $l$ {\em enabled} at state $q$ if there exists a state $q'$ such that $q\xrightarrow{l}q'$.
\begin{definition}[Initialized]
Let $(Q,q_0,L,\rightarrow)$ be an LTS.
A state $q\in Q$ is {\em subsumed} by another state $q'\in Q$ if for all $l\in L$, $l$ is enabled at $q$ implies $l$ is enabled at $q'$.  
The LTS is called {\em initialized} if its initial state $q_0$ is not subsumed by any other state.
\end{definition}
We call an LTS $(Q,q_0,L,\rightarrow)$ {\em non-trivial} if there is at least one state $q\neq q_0$ and one label $l\in L$ such that $q_0\xrightarrow{l}q$.
A data structure $\mathcal{D}$ is non-trivial if $\LTSx {\mathcal{D}}$ is non-trivial.

\begin{lemma}\mylabel{lem:all-initialized}
The LTS corresponding to $\mypool,\mypoolmem,\myqueue,\mystack,\myreg$ are initialized.
\end{lemma}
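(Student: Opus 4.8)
The plan is to unfold the definition of \emph{initialized} into its contrapositive and then supply, for each of the five transition systems, an explicit \emph{witness} label. Recall that $q_0$ is subsumed by $q'$ exactly when every label enabled at $q_0$ is also enabled at $q'$. Hence $\LTSx{\mathcal D}$ is initialized iff for every state $q'\neq q_0$ there exists a label $l$ that is enabled at $q_0$ but \emph{not} enabled at $q'$. The whole proof therefore reduces to exhibiting such a witness $l$ for each data structure and each $q'\neq q_0$, so I would organize the argument as five short cases.

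For $\mypool$, $\mypoolmem$, $\myqueue$ and $\mystack$ I would use the ``emptiness-testing'' labels $\mytakex{\NULL}$, $\mytakex{\NULL}$, $\mydeqx{\NULL}$ and $\mypopx{\NULL}$, respectively. The key observation is that, by inspection of the transition relations, each of these labels is enabled at a state $q$ iff $q$ equals the empty state (the transitions are guarded by $q=q'=\emptyset$ for the pools and $q=q'=\varepsilon$ for queue and stack, and in $\mypoolmem$ the $\mytakex{\NULL}$ transition is inherited verbatim from $\mypool$). Since the initial state is precisely this unique empty state, the single emptiness-testing label is enabled at $q_0$ and disabled at \emph{every} $q'\neq q_0$ simultaneously. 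Thus one uniform witness dispatches all non-initial states at once, and initializedness follows immediately for these four structures.

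The register $\myreg$ is the case that needs a different, per-state witness, and this is the only genuine subtlety. Here the initial state is $\lambda x.0$, and no single label serves as a uniform witness: for any fixed read label $\myrdx{i}{x}$ there are non-initial states at which it is still enabled, and writes are enabled everywhere. Instead, given an arbitrary $q'\neq\lambda x.0$, I would first extract an index $i$ with $q'(i)\neq 0$ (such an $i$ exists precisely because $q'$ differs from the all-zero function). The witness is then $\myrdx{i}{0}$: it is enabled at $q_0=\lambda x.0$ because $q_0(i)=0$, but it is disabled at $q'$ because the read transition requires $q'(i)=0$, which fails. This establishes that $\lambda x.0$ is not subsumed by any other state.

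The main obstacle, to the extent there is one, is recognizing that the collection structures and the register require structurally different witnesses: the former admit a \emph{total} emptiness test that is disabled off the initial state, yielding a uniform witness, whereas the latter forces a witness chosen as a function of which coordinate of $q'$ deviates from the initial value. Beyond that, the verifications are routine reads of the transition relations, and the argument relies only on the already-established fact that in each of the five systems the initial state is the unique ``empty'' (or all-zero) state.
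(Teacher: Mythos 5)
Your proof is correct and follows essentially the same route as the paper's: the emptiness-testing labels $\mytakex{\NULL}$, $\mydeqx{\NULL}$, $\mypopx{\NULL}$ serve as uniform witnesses for the four collection structures, and for $\myreg$ a per-state witness $\myrdx{i}{0}$ is chosen at a coordinate $i$ where $q'(i)\neq 0$. The only difference is expository — you make explicit the contrapositive unfolding of subsumption and the uniform-versus-per-state distinction, which the paper leaves implicit.
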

\begin{proof}
In all cases, there are transitions which are only enabled at the initial state.
The transitions for all except for $\LTSx \myreg$ are $\mytakex \NULL$ in $\LTSx \mypool$ and $\LTSx \mypoolmem$; $\mydeqx \NULL$ in $\LTSx \myqueue$; $\mypopx \NULL$ in $\LTSx \mystack$.
For $\LTSx \myreg$, let $q'\neq q_0$ be some state. 
By definition, there must be at least one $x\in \mathbb{N}$ such that $q'(x)\neq 0$ since otherwise $q'$ and $q_0=\lambda x.0$ are identical.
Then, $\myrdx x 0$ is enabled at $q_0$ but not at $q'$.
\end{proof}

For each data structure $\mathcal{D}$, we distinguish an implementation $\myimplisox {\mathcal{D}}$, called the {\em isolated implementation} of $\mathcal{D}$, whose induced histories are thread-locally valid $\mathcal{D}$-behaviors.
Formally, for any execution trace $\tau$ of $\myimplisox {\mathcal{D}}$ and for any $t\in Tid$, the induced history of $\tau\downarrow_{t}$ is a valid $\mathcal{D}$-behavior.
Intuitively, isolated implementations are those which do not need any communication between threads.
For most data structures, such implementations are not desirable and as the following result shows are ruled out by linearizability.

\begin{lemma}\mylabel{lem:initialized-nonlin}
If $\mathcal{D}$ is initialized and non-trivial, then $\myimplisox {\mathcal{D}}$ is not linearizable.
\end{lemma}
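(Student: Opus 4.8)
The plan is to exhibit one induced history of $\myimplisox{\mathcal D}$ that has no legal linearization; since a set of histories is linearizable only if every member is, a single counterexample establishes the claim. The characteristic feature I will exploit is that $\myimplisox{\mathcal D}$ imposes no coordination between threads: each thread's local projection is, on its own, a valid $\mathcal D$-behavior, and every interleaving of such independently valid local runs is realizable. So I would let two threads each perform a single operation that is legal in isolation from $q_0$, but schedule them one after the other so that the resulting global history can only be replayed in an order that the sequential specification rejects.

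Concretely, I would first use non-triviality to fix a label $l$ and a state $q\neq q_0$ with $q_0\xrightarrow{l}q$; the length-one behavior $e$ carrying label $l$ is valid, as $l$ is enabled at $q_0$. Then I would use initializedness: since $q_0$ is not subsumed by $q$, there is a label $l'$ enabled at $q_0$ but not at $q$, and the length-one behavior $e'$ carrying $l'$ is likewise valid. (For $\mypool$ one may read $l$ as $\myputx x$ and $l'$ as $\mytakex \NULL$; for $\myqueue$, $l$ as $\myenqx x$ and $l'$ as $\mydeqx \NULL$.) Now take the threaded history $\thrhist{h}$ in which thread $t_1$ executes $e$ to completion and only afterwards thread $t_2$ executes $e'$ to completion, i.e.\ $\history{h}=e_i\cdot e_r\cdot e'_i\cdot e'_r$ with fresh identifiers; this $\thrhist{h}$ is well-formed, complete, and sequential.

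Two verifications finish the argument. For membership in $\myimplisox{\mathcal D}$, note $\history{h}(t_1)=e$ and $\history{h}(t_2)=e'$ are each valid $\mathcal D$-behaviors, so $\thrhist{h}$ is thread-locally valid and hence induced by the isolated implementation. For non-linearizability, observe that $e$ responds before $e'$ is invoked, so $e\precOrder{\history{h}}e'$; any linearization must preserve this precedence, leaving $e\cdot e'$ as the only possible sequential witness. But a run with trace $e\cdot e'$ would, after executing $e$ from $q_0$, have to reach a state enabling $l'$, whereas the $l$-transition from $q_0$ lands in $q$, where $l'$ was chosen to be disabled. Hence $e\cdot e'$ is not a valid $\mathcal D$-behavior, $\thrhist{h}$ admits no legal linearization, and $\myimplisox{\mathcal D}$ is therefore not linearizable.

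The step I expect to be the real obstacle is the last one: ruling out \emph{every} run with trace $e\cdot e'$. Initializedness only guarantees that $l'$ is disabled at the particular successor $q$ of $q_0$ under $l$, so were $\LTSx{\mathcal D}$ nondeterministic some other $l$-successor of $q_0$ might re-enable $l'$ and salvage the linearization. For the five data structures considered this gap does not arise, because the transition rules are deterministic and $q$ is the unique $l$-successor of $q_0$; in stating the lemma in full generality I would accordingly either restrict attention to deterministic $\LTSx{\mathcal D}$ or strengthen initializedness so that no $l$-successor of $q_0$ enables the distinguishing label.
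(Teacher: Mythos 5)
Your proof is correct and follows essentially the same route as the paper's own: non-triviality supplies the first event $e$, initializedness supplies an event $e'$ enabled at $q_0$ but not at the $e$-successor $q$, and the two-thread history $e_i\cdot e_r\cdot e'_i\cdot e'_r$ is realizable in $\myimplisox {\mathcal{D}}$ yet admits only the sequential witness $e\cdot e'$, which is invalid. The determinism caveat you raise at the end is a genuine observation, but note that it applies equally to the paper's proof, which also concludes that $e\cdot e'$ is not a valid $\mathcal{D}$-behavior by inspecting only the single successor $q$ --- an inference that is sound only when $q$ is the unique $e$-successor of $q_0$, as it is for all five data structures considered.
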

\begin{proof}
Let $\LTSx {\mathcal{D}}$ be the tuple $(Q,q_0,E,\rightarrow)$.
Because $\mathcal{D}$ is non-trivial, there is a transition $q_0\xrightarrow{e}q$ for some $e=(m,d_i,d_o)\in E$ and $q\neq q_0$.
Because $\mathcal{D}$ is initialized, there is an event $e'\in E$ such that $e'=(m',{d'}_i,{d'}_o)$ is enabled at $q_0$ and not at $q$.
Then consider the history $\history{h}$ for $t,t'\in T$:
\[
\history{h} \stackrel{def}{=} (t,m_i(d_i))\cdot (t,m_r(d_o)) \cdot (t',m_i'({d'}_i)) \cdot (t',m_r'({d'}_o))
\]
The only linearization for this history is $(m,d_i,d_o)\cdot (m',d'_i,d'_o)$.
Since this is not a valid $\mathcal{D}$-behavior, $\history{h}$ is not linearizable.
However, because both $(m,d_i,d_o)$ and $(m',d'_i,d'_o)$ are individually valid $\mathcal{D}$-behaviors, $\history{h}$ is induced by some execution trace of $\myimplisox {\mathcal{D}}$, implying that the latter is not linearizable.
\end{proof}
The following result immediately follows from Lemma's~\ref{lem:all-initialized} and~\ref{lem:initialized-nonlin}.
\begin{corollary}
The isolated implementations of $\mypool$, $\mypoolmem$, $\myqueue$, $\mystack$ and $\myreg$ are not linearizable.
\end{corollary}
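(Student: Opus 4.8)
The plan is to derive the corollary directly from Lemmas~\ref{lem:all-initialized} and~\ref{lem:initialized-nonlin}. Lemma~\ref{lem:initialized-nonlin} states that any data structure that is both \emph{initialized} and \emph{non-trivial} has a non-linearizable isolated implementation, and Lemma~\ref{lem:all-initialized} already supplies the first hypothesis for all five structures simultaneously. So the only work left is to establish the second hypothesis, non-triviality, for each of $\mypool$, $\mypoolmem$, $\myqueue$, $\mystack$ and $\myreg$, and then instantiate Lemma~\ref{lem:initialized-nonlin} five times.

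First I would verify non-triviality by exhibiting, in each case, a single transition out of the initial state that lands in a state distinct from $q_0$. For $\mypool$ and $\mypoolmem$ the transition $\emptyset \xrightarrow{\myputx x} \{x\}$ works, since $\{x\}\neq\emptyset$; for $\myqueue$ the transition $\varepsilon\xrightarrow{\myenqx x}x$ and for $\mystack$ the transition $\varepsilon\xrightarrow{\mypushx x}x$ both leave the empty-sequence initial state; and for $\myreg$ the transition $\lambda x.0\xrightarrow{\mywrx i 1}(\lambda x.0)[i\mapsto 1]$ yields a state differing from $\lambda x.0$ at index $i$. Each of these witnesses a label enabled at $q_0$ whose target $q$ satisfies $q\neq q_0$, so every one of the five LTS is non-trivial. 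With both hypotheses in hand, applying Lemma~\ref{lem:initialized-nonlin} to each structure yields that $\myimplisox{\mypool}$, $\myimplisox{\mypoolmem}$, $\myimplisox{\myqueue}$, $\myimplisox{\mystack}$ and $\myimplisox{\myreg}$ are all non-linearizable, which is exactly the claim.

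I expect essentially no obstacle here: the substantive content is already packaged in the two cited lemmas, and the construction inside Lemma~\ref{lem:initialized-nonlin} (a two-event, two-thread history whose only linearization is an illegal behavior) does the real work. The single point to be careful about is that non-triviality is used as a hypothesis but is not proved as a standalone statement earlier in the section, so I would make the five transitions above explicit rather than leave them implicit, keeping the corollary self-contained and the instantiation of Lemma~\ref{lem:initialized-nonlin} fully justified.
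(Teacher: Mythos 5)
Your proposal is correct and follows exactly the paper's route: the paper derives this corollary immediately from Lemmas~\ref{lem:all-initialized} and~\ref{lem:initialized-nonlin}, just as you do. Your explicit verification of non-triviality (exhibiting a transition out of $q_0$ to a distinct state for each of the five structures) fills in a hypothesis the paper uses but never checks explicitly, which makes your version slightly more self-contained without changing the argument.
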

The previous result shows that the definition of linearizability is strong enough to leave out these pathological implementations.
As we show next, sequential consistency is weak enough to allow for isolated implementations of some data structures.

\begin{theorem}[SC and Isolated Implementations]
If $\mathcal{D}$ is composable, then $\myimplisox {\mathcal{D}}$ is sequentially consistent.
\end{theorem}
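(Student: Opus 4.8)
The plan is to show that every complete threaded history $\thrhist{h}$ arising from $\myimplisox{\mathcal{D}}$ admits a valid sequential witness that respects each thread's local order, which is exactly what sequential consistency requires. The starting point is the defining property of the isolated implementation: for every thread $t$ the local projection $\history{h}(t)$, read as a $\mathcal{D}$-behavior, is already a \emph{valid} $\mathcal{D}$-behavior. Since $\thrhist{h}$ has finitely many events, only finitely many threads $t_1,\dots,t_n$ contribute actions, and their event sets are pairwise disjoint because event identifiers are unique. The task therefore reduces to interleaving the finitely many valid behaviors $\history{h}(t_1),\dots,\history{h}(t_n)$ into a single valid $\mathcal{D}$-behavior that preserves the internal order of each.

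First I would lift composability from pairs to finite collections by induction on the number of contributing threads. In the inductive step, suppose $\history{g}$ is a valid $\mathcal{D}$-behavior interleaving $\history{h}(t_1),\dots,\history{h}(t_k)$ while preserving each of their orders. Applying the composability of $\mathcal{D}$ to the pair $(\history{g},\history{h}(t_{k+1}))$ yields partitionings $\history{g}=\history{g}_1\ldots\history{g}_m$ and $\history{h}(t_{k+1})=\history{f}_1\ldots\history{f}_m$ whose shuffle $\history{g}_1\history{f}_1\ldots\history{g}_m\history{f}_m$ is again valid. The point to stress is that order preservation is transitive: the $\history{g}$-events occur in the shuffle exactly in their $\history{g}$-order, so each of $\history{h}(t_1),\dots,\history{h}(t_k)$ retains its order, while the $\history{h}(t_{k+1})$-events retain theirs. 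This closes the induction and produces a valid behavior $\history{g}$ over the entire event set of $\thrhist{h}$ that is an order-preserving interleaving of all the thread-local behaviors.

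It then remains to read $\history{g}$ back as a threaded history. Each event of $\history{g}$ carries a unique identifier and hence belongs to exactly one thread, so tagging every event with its originating thread and expanding it into its invocation immediately followed by its matching response gives a sequential threaded $\mathcal{D}$-history $\thrhist{s}$ with $act(\thrhist{s})=\history{g}$. By construction $\thrhist{s}$ ranges over the same threaded actions as $\thrhist{h}$, so it is a permutation of $\thrhist{h}$; and since the interleaving preserved each thread's internal order, $\history{s}(t)=\history{h}(t)$ for all $t$. As $act(\thrhist{s})=\history{g}$ is valid, $\thrhist{s}$ is the witness demanded by the definition, so $\thrhist{h}$ is sequentially consistent and therefore $\myimplisox{\mathcal{D}}$ is SC.

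The step I expect to be the main obstacle is the $n$-ary extension of composability, and in particular the careful verification that order preservation propagates through repeated compositions: composability as stated only guarantees that the two immediate operands keep their orders, so I must argue that feeding the already-interleaved $\history{g}$ in as a single operand carries along the ordering constraints of every thread previously folded into it. A secondary, routine matter is the bookkeeping that identifies the projection $\history{h}(t)$ of the induced history with the induced history of the projected trace $\tau\downarrow_{t}$, together with the finiteness of the contributing thread set; both follow directly from the standing assumptions that histories are complete and that event identifiers are unique.
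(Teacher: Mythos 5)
Your proof is correct, and its core coincides with the paper's: induction on the number of contributing threads, powered by pairwise composability and the observation that composition preserves each operand's internal order. The difference lies in what the induction is carried on. The paper inducts on the sequential-consistency statement itself: it merges the behaviors of two threads $t,t'$ into a single valid behavior, assigns it to a fresh thread identifier $u$, asserts that the resulting threaded history is again induced by some execution trace of $\myimplisox{\mathcal{D}}$, applies the theorem's inductive hypothesis to this $k$-thread history, and finally renames $u$ back to $t,t'$ in the witness. You instead isolate the purely combinatorial content as an $n$-ary composability lemma about behaviors (folding the thread-local behaviors one at a time into an accumulated valid interleaving), and only at the end convert the resulting valid behavior into a sequential threaded witness. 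Your route buys a small but genuine simplification: it never needs the paper's implicit closure claim that any threaded history whose per-thread projections are valid is itself realizable by $\myimplisox{\mathcal{D}}$ (the step ``there is an execution trace $\tau'$ which induces $\thrhist{g}$''), nor the thread-renaming bookkeeping; the paper's route, conversely, keeps the inductive hypothesis in exactly the shape of the theorem, so no auxiliary lemma has to be formulated. Both arguments hinge on the point you flag as the main obstacle---that order preservation propagates through repeated pairwise compositions---which the paper handles implicitly when it undoes the merge by renaming $u$.
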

\begin{proof}
Let $\tau$ be an execution trace of $\myimplisox {\mathcal{D}}$ and let $\thrhist{h}$ be the induced threaded-history.
We do induction on the number of threads that execute at least one method during $\tau$.
If $\thrhist{h}$ is due to a single thread, then it is a valid $\mathcal{D}$-behavior and hence SC.
Assume that if $\thrhist{h}$ has less than or equal to $k$ different threads, it is SC.
Now let $\thrhist{h}$ have $k+1$ different threads and let $t\neq t'$ be the identifiers of two of those.
By the definition of $\myimplisox {\mathcal{D}}$, both $\history{h}(t)$ and $\history{h}(t')$ are valid $\mathcal{D}$-behaviors.
By composability, there is an interleaving $\history{h}'$ of $\history{h}(t)$ and $\history{h}(t')$ which is also a valid $\mathcal{D}$-behavior.
Let $u\in T$ be a thread identifier that does not appear in $\thrhist{h}$ and let $\thrhist{h}'$ denote the threaded-history obtained by coupling each symbol of $\history{h}'$ with $u$.
That is, $\thrhist{h}'$ is the sequence $\mycompact {u,\history{h}'} {len(\history{h}')}$.
Let $\thrhist{g}$ be the threaded history which is constructed by first projecting out from $\thrhist{h}$ all symbols $s$ with $act(s)=t$ or $act(s)=t'$, and then extending it with $\thrhist{h}'$.
By construction, $\thrhist{g}$ has $k$ different threads and each $\history{g}(t)$ is a valid $\mathcal{D}$-behavior.
By definition, there is an execution trace $\tau'$ which induces $\thrhist{g}$.
By inductive hypothesis, there is a sequential threaded valid $\mathcal{D}$-history $\thrhist{s}$ such that for all $t\in T$ we have $\thrhist{s}(t)=\thrhist{g}(t)$.
Finally, because $\thrhist{h}'$ was an interleaving of $\thrhist{h}(t)$ and $\thrhist{h}(t')$, replacing $u$ in $\thrhist{s}$ with the original $t$ or $t'$ identifiers yields the valid sequential threaded $\mathcal{D}$-history corresponding to $\thrhist{h}$.
\end{proof}
This implies that the definition of sequential consistency is not strong enough for composable data structures.
\begin{corollary}
The isolated implementations of $\mypool$, $\myqueue$ and $\mystack$ are sequentially consistent.
\end{corollary}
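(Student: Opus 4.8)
The plan is to invoke the Theorem on SC and Isolated Implementations together with the three composability lemmas established earlier in this section. First I would recall that the Theorem asserts that composability of a data structure $\mathcal{D}$ is a \emph{sufficient} condition for its isolated implementation $\myimplisox{\mathcal{D}}$ to be sequentially consistent. The only remaining task is therefore to verify that each of $\mypool$, $\myqueue$ and $\mystack$ satisfies the composability hypothesis of that Theorem.

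For this I would appeal directly to the Lemma on Pool and Composability, the Lemma on Stack and Composability, and the Lemma on Queue and Composability, which respectively establish that $\mypool$, $\mystack$ and $\myqueue$ are composable. Instantiating the Theorem with $\mathcal{D}=\mypool$, then with $\mathcal{D}=\myqueue$, and finally with $\mathcal{D}=\mystack$ immediately yields that $\myimplisox{\mypool}$, $\myimplisox{\myqueue}$ and $\myimplisox{\mystack}$ are each sequentially consistent, which is exactly the assertion of the corollary.

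Since the statement is a direct specialization of the Theorem to three concrete data structures, there is no genuine obstacle to overcome in the corollary itself; the substantive work lies entirely in the composability lemmas on which it rests (and, through those, in the Theorem's inductive composability argument). The one point worth flagging is that $\myreg$ and $\mypoolmem$ are deliberately absent from the list, in accordance with the earlier counterexamples demonstrating that these two structures fail to be composable, so the Theorem's hypothesis is genuinely not met for them and they are correctly omitted.
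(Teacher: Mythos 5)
Your proposal is correct and matches the paper exactly: the corollary is an immediate consequence of the theorem on SC and isolated implementations (composability implies $\myimplisox{\mathcal{D}}$ is SC) combined with the three lemmas establishing composability of $\mypool$, $\myqueue$ and $\mystack$, which is precisely why the paper states it without further proof. Your remark that $\mypoolmem$ and $\myreg$ are rightly excluded, since the paper exhibits counterexamples to their composability, is also consistent with the paper's discussion.
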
 

An execution trace $\tau$ of a $\mathcal{D}$ implementation is called {\em $t$-singular} if $h(\tau)\uparrow^t$ is linearizable and $h(\tau)\downarrow_t$ is a valid $\mathcal D$-behavior.
Let $\myimplclosex {\mathcal{D}}$, the {\em singular implementation} of $\mathcal{D}$, denote the union of all linearizable execution traces and all $t$-singular execution traces.
We next show that singular implementations of robust data structures are non-linearizable.

\begin{lemma}
If $\mathcal{D}$ is robust, initialized and non-trivial, then $\myimplclosex {\mathcal{D}}$ is not linearizable.
\end{lemma}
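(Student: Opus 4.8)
The plan is to recycle the non-linearizable witness from the proof of Lemma~\ref{lem:initialized-nonlin} and to verify that it additionally arises from a \emph{$t$-singular} trace, so that it witnesses non-linearizability of $\myimplclosex{\mathcal{D}}$ and not merely of $\myimplisox{\mathcal{D}}$. Writing $\LTSx{\mathcal{D}} = (Q,q_0,E,\rightarrow)$, non-triviality supplies a transition $q_0 \xrightarrow{e} q$ with $e = (m,d_i,d_o)$ and $q \neq q_0$, while the initialized hypothesis (taking the subsuming candidate to be $q$ itself, which is legitimate since $q \neq q_0$) supplies an event $e' = (m',d'_i,d'_o)$ enabled at $q_0$ but disabled at $q$. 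I would then let a thread $t$ run $e$ to completion and, strictly after its response, let a second thread $t'$ run $e'$ to completion, producing the threaded history $\thrhist{h}$ whose action sequence is $(t,m_i(d_i)) \cdot (t,m_r(d_o)) \cdot (t',m'_i(d'_i)) \cdot (t',m'_r(d'_o))$.

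Two checks then remain. For membership in $\myimplclosex{\mathcal{D}}$ I would show $\thrhist{h}$ is induced by a $t$-singular trace: the projection $\history{h}\downarrow_{t}$ is the single-event behavior $e$, which is valid because $q_0 \xrightarrow{e} q$; and the complementary projection $\history{h}\uparrow^{t}$ is the single-event behavior $e'$, whose only linearization is itself and is valid because $e'$ is enabled at $q_0$, so $\history{h}\uparrow^{t}$ is linearizable. Invoking the same realizability convention as in the proof of Lemma~\ref{lem:initialized-nonlin} (any history meeting an implementation's defining predicate is induced by one of its traces), $\thrhist{h}$ is induced by a trace of $\myimplclosex{\mathcal{D}}$. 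For non-linearizability I would argue verbatim as before: since $e_r \abOrder{\history{h}} e'_i$ we have $e \precOrder{\history{h}} e'$, so the unique linearization is $e \cdot e'$, which is not a valid behavior because $e'$ is disabled at $q$; hence $\thrhist{h}$ is not linearizable and neither is $\myimplclosex{\mathcal{D}}$.

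The only real divergence from Lemma~\ref{lem:initialized-nonlin}, and thus the step I would watch most carefully, is discharging the stronger $t$-singular side condition in place of the isolated one: beyond checking that thread $t$'s local trace is valid, one must also check that everything \emph{outside} $t$ is linearizable, which here is immediate since that part is a single valid event. A more conceptual subtlety is the status of \emph{robustness}. The $t$-singular predicate only asks $\history{h}\downarrow_{t}$ to be valid, and the witness above takes this to be the single event $e$; consequently the non-linearizability conclusion consumes only non-triviality and initializedness, with robustness playing no active role in this direction (it is instead what makes $t$ a genuinely non-synchronizing thread, and what will drive the companion result that $\myimplclosex{\mathcal{D}}$ is nonetheless sequentially consistent). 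If one prefers the witness to embody that intent, I would instead let thread $t$ execute a robust sequence guaranteed by the robustness hypothesis and pick $e'$ disabled by the state it reaches; but I expect the single-event witness to be the cleaner route, simply observing that robustness is not needed here.
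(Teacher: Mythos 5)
Your proof is correct as the paper's definitions are literally stated, but it takes a genuinely different route from the paper's own proof. The paper does not recycle the witness of Lemma~\ref{lem:initialized-nonlin}: it has the singular thread $t$ execute an entire robust sequence $\history{e}=\mycompact{e}{n}$ reaching state $q_n$, has a second thread then run an event not enabled at $q_n$, and therefore needs a case split --- if $q_n\neq q_0$ initializedness supplies the disabled event, while if $q_n=q_0$ it falls back on non-triviality --- which your single-event witness sidesteps entirely, because non-triviality hands you a state $q\neq q_0$ directly. Your route is shorter, and it makes a sharp observation: with $t$-singularity defined as ``$h(\tau)\downarrow_t$ is a \emph{valid} $\mathcal{D}$-behavior'', the robustness hypothesis is never consumed. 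The paper's heavier witness, however, is insurance against the reading the paper almost certainly intends: the proof of the theorem on SC and singular implementations appends $h(\tau)\downarrow_t$ to a linearization of $h(\tau)\uparrow^t$ and claims the result is valid, which is sound only if $h(\tau)\downarrow_t$ is \emph{robust}, and the template of Fig.~\ref{fig:closed} likewise commits the local sequence as soon as it stops being robust. Under that robust reading your witness no longer certifies membership in $\myimplclosex{\mathcal{D}}$, since the event $e$ provided by non-triviality need not be enabled at every state; and your one-line fallback (``run a robust sequence and pick an event disabled at the state it reaches'') is exactly the paper's first case but omits its second: when the robust sequence ends back at $q_0$, initializedness yields no disabled event, and one must invoke non-triviality as the paper does. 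So your argument stands on the letter of the definitions, and your diagnosis that robustness is idle there is accurate, but to be safe against the intended reading of singularity you would need to complete the fallback with the paper's $q_n=q_0$ case.
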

\begin{proof}
Let $\LTSx {\mathcal{D}}$ be the tuple $(Q,q_0,E,\rightarrow)$.
Let $\history{e}=\mycompact e n$ be a robust sequence.
Let $q_0\xrightarrow{e_1}q_1\ldots\xrightarrow{e_n}q_n$ be the run whose trace is $\history{e}$.
If $q_n\neq q_0$, because $\mathcal{D}$ is initialized, there must be some event $e'\in E$ enabled at $q_0$ but not at $q_n$.
Then, similar to the proof of Lemma~\ref{lem:initialized-nonlin}, the threaded history in which thread $t\in T$ runs all actions associated with $\history{h}$ followed by another thread $u\in T$ running the two actions generated by the event $e$ is in $\myimplclosex {\mathcal{D}}$ because $e$ represents a valid $\mathcal{D}$-behavior and $\history{e}$ represents a robust sequence.
If $q_n=q_0$, then there must be a state $q'\neq q_0$ and a label $e'$ such that $q_0\xrightarrow{e'}q'$ holds.
The rest of the argument is the same as the previous one.
\end{proof}
The following is immediate from Lemma~\ref{lem:all-singular}.
\begin{corollary}
The singular implementations of $\mypool$, $\mypoolmem$, $\myqueue$, $\mystack$ and $\myreg$ are not linearizable.
\end{corollary}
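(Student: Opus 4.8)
The plan is to read this corollary as a direct instantiation of the lemma immediately preceding it, which asserts that any data structure that is simultaneously robust, initialized, and non-trivial has a non-linearizable singular implementation. It therefore suffices to check that each of the five structures $\mypool$, $\mypoolmem$, $\myqueue$, $\mystack$, $\myreg$ satisfies all three hypotheses, and then to invoke that lemma once per structure.

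First I would discharge the two hypotheses that are already available. Robustness of all five structures is precisely the content of Lemma~\ref{lem:all-singular}, and the fact that each of the associated transition systems is initialized is Lemma~\ref{lem:all-initialized}. Neither condition needs any fresh argument here; they are cited wholesale.

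The only remaining obligation is non-triviality, i.e. producing for each structure a label $l$ and a state $q \neq q_0$ with $q_0 \xrightarrow{l} q$. This is immediate from the transition definitions: in $\LTSx{\mypool}$ and $\LTSx{\mypoolmem}$ the transition $\emptyset \xrightarrow{\myputx x} \{x\}$ leaves the initial state; in $\LTSx{\myqueue}$ the transition $\varepsilon \xrightarrow{\myenqx x} x$ does; in $\LTSx{\mystack}$ the transition $\varepsilon \xrightarrow{\mypushx x} x$ does; and in $\LTSx{\myreg}$ the transition $\lambda y.0 \xrightarrow{\mywrx i 1} (\lambda y.0)[i \mapsto 1]$ reaches a state that is nonzero at location $i$, hence distinct from the all-zero initial state $q_0 = \lambda y.0$. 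In every case the target state is plainly different from $q_0$, so each structure is non-trivial.

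With all three hypotheses verified for every structure, applying the preceding lemma separately to $\mypool$, $\mypoolmem$, $\myqueue$, $\mystack$, and $\myreg$ yields the claim. I do not expect any genuine obstacle: the corollary is purely a matter of confirming that the lemma's hypotheses hold, the non-triviality checks are a one-line inspection of each LTS, and the sole substantive ingredient, robustness, is supplied entirely by Lemma~\ref{lem:all-singular}.
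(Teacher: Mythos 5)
Your proposal is correct and matches the paper's approach: the paper also derives this corollary by instantiating the preceding lemma, citing Lemma~\ref{lem:all-singular} for robustness (with the initialized and non-trivial hypotheses implicitly carried over from Lemma~\ref{lem:all-initialized} and the earlier discussion). You are in fact slightly more careful than the paper, which never spells out the non-triviality checks you perform explicitly.
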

Intuitively, in singular implementations the behavior of some thread $t$ can be hidden from the rest of the system as long as the sequence generated by $t$ remains robust. 
This, although arguably not as bad as being isolated, is still an undesirable feature and linearizability forbids it.
We now show that singular implementations are sequentially consistent.
\begin{theorem}[SC and Singular Implementations]
For any data structure $\mathcal D$, $\myimplclosex {\mathcal D}$ is sequentially consistent.
\end{theorem}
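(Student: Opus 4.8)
The plan is to read off the two reasons a trace can lie in $\myimplclosex{\mathcal D}$ and treat them separately. If $\tau$ is one of the linearizable traces, its induced threaded history is sequentially consistent at once by Fact~\ref{fact:sc-lin}. All the work therefore sits in the remaining case, where $\tau$ is $t$-singular for some thread $t$, and my strategy there is to place the whole of $t$'s activity after a witness assembled from the other threads.

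So let $\tau$ be $t$-singular with induced threaded history $\thrhist{h}$. First I would deal with the other threads: by definition $h(\tau)\uparrow^t$ is linearizable, so Fact~\ref{fact:sc-lin} makes it sequentially consistent, producing a sequential threaded $\mathcal D$-history $\thrhist{s}'$ over the threads distinct from $t$ with $\history{s}'(u)=\history{h}(u)$ for every $u\neq t$ and with valid behavior. A run of $\LTSx{\mathcal D}$ on this behavior leaves $q_0$ and ends in some state $q$. I would then invoke the second half of the $t$-singular clause, that the local sequence $\history{h}(t)=h(\tau)\downarrow_t$ is robust: robustness guarantees a run with trace $\history{h}(t)$ starting from exactly this state $q$.

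The witness is obtained by gluing the two pieces: let $\thrhist{s}$ be $\thrhist{s}'$ followed by the sequential rendering of $\history{h}(t)$ with every action tagged by $t$. Concatenating the two runs shows that $act(\thrhist{s})=\history{s}'\cdot\history{h}(t)$ is a valid $\mathcal D$-behavior, so $\thrhist{s}$ is a bona fide sequential threaded $\mathcal D$-history. It is a permutation of $\thrhist{h}$; and because the $t$-block is entirely at the end, projecting onto any $u\neq t$ still yields $\history{s}'(u)=\history{h}(u)$, while projecting onto $t$ yields $\history{h}(t)$. Hence $\thrhist{s}$ witnesses sequential consistency of $\thrhist{h}$.

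The step I expect to be the obstacle is precisely this gluing, i.e. the validity of $\history{s}'\cdot\history{h}(t)$, and it is here that robustness of $\history{h}(t)$ is essential rather than mere validity of the local behavior. Without robustness the concatenation can genuinely fail: for $\myreg$ a non-robust local sequence such as $\mywrx 1 1\cdot\myrdx 2 0$, set against $\mywrx 2 1\cdot\myrdx 1 0$ on another thread, imposes a cyclic read-before-write constraint that no global order can satisfy. The remaining points---that $\thrhist{s}$ is a permutation, that the projections of the other threads are untouched by appending $t$'s block, and that a concatenation of sequential histories is sequential---are routine, so the argument truly hinges on applying robustness at the single join.
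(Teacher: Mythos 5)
Your proposal is correct and follows essentially the same route as the paper's proof: split on whether the trace is linearizable (then apply Fact~\ref{fact:sc-lin}) or $t$-singular, and in the latter case append $t$'s robust local sequence to a valid serialization of the remaining threads' history, then check the per-thread projections. The only differences are presentational---you justify the concatenation explicitly via the LTS state reached after the serialization, a step the paper leaves implicit, and your remark that robustness (rather than mere validity) of $\history{h}(t)$ is what makes the join work is exactly the point at which the paper's own argument invokes robustness.
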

\begin{proof}
Let $\tau$ be an execution trace of $\myimplclosex {\mathcal D}$ and let $\thrhist h$ be the induced threaded history.
If $\thrhist h$ is linearizable, then by Fact.~\ref{fact:sc-lin} it is also sequentially consistent.
If $\thrhist h$ is not linearizable, then there must be some thread $t\in T$ such that $h(\tau)\downarrow_t$ is robust.
Furthermore, we know that $h(\tau)\uparrow^t$ is linearizable.
Assume that $\history{s}$ is the linearization of $h(\tau)\uparrow^t$.
Then the sequence $\history{s}'$ formed by appending $h(\tau)\downarrow_t$ to $\history{s}$ is a valid $\mathcal{D}$-behavior. 
Since for all $u\in T$ we have $\history{s}'\downarrow_u=h(\tau)\downarrow_u$, we conclude that $\thrhist h$ is sequentially consistent.
\end{proof}

We end this section by giving templates for isolated and singular implementations.
Assume that we already have a sequential implementation of any data structure and use the notation $\myObjx {\mathcal D}$ to denote the class implementing the methods of $\mathcal{D}$.
For instance, for the $\myqueue$ data structure, $\myObjx {\myqueue}$ implements the required methods which are called by appending the method to an object $O$ of type $\myObjx {\myqueue}$ as in $O$.$\myenqx x$.
In our programs, we assume that each thread $t\in T$ has its thread-local copy of type $\myObjx {\mathcal D}$ and use the notation \texttt{Obj}[$t$] to denote the object exclusively used by $t$.
Then, in an isolated implementation, each method $m\in \Sigma_{\mathcal D}$ has the following template: 
\begin{figure}[hbt]
$m$($d_i$) \{ $d_o$ = \texttt{Obj}[self].$m$($d_i$); \textbf{return} $d_o$; \}
\end{figure}
Here self evaluates to $t$ whenever the method is run by thread $t$.

\floatname{algorithm}{Event}
\renewcommand{\thealgorithm}{}
\setalgorithmicfont{\footnotesize}
\begin{wrapfigure}{L}{0.35\textwidth}
 \begin{minipage}{0.35\textwidth}
 \tiny{
  \begin{algorithm}[H]
   \caption{$d_o=m(d_i)$}
\begin{algorithmic}
\IF {self=$i$}
 \STATE $d_o \gets \mathtt{Obj}$[self].$m$($d_i$);
 \STATE newseq $\gets$ lseq$\cdot m$($d_i$,$d_o$);
 \IF {notRobust(lseq)} 
  \STATE atomic <commit(lseq)>;
  \STATE atomic <$d_o \gets$ O.$m$($d_i$)>;
  \STATE lseq $\gets\varepsilon$;
 \ELSE 
  \STATE lseq $\gets$ newseq; 
 \ENDIF
\ELSE 
 \STATE atomic <$d_o \gets$ O.$m$($d_i$)>;
\ENDIF
\RETURN $d_o$;
\end{algorithmic}
  \end{algorithm}
  }
 \end{minipage}
\caption{The template for $m(d_i,d_o)$ in $\myimplclose$.}
\mylabel{fig:closed}
\end{wrapfigure}
As for singular implementations, we use the template given in Fig.~\ref{fig:closed}.
Intuitively, there is one non-deterministically assigned thread id ($i$) which each thread checks whether is equal to its own.
There is a local object for each thread, like the isolated implementation template explained above.
Additionally, there is another object, \texttt{O}, visible to all threads.
If the thread with identifier $t\neq i$ invokes a method $m$ with input $d_i$, then it applies $m(d_i)$ on \texttt{O} atomically (e.g. performing the operation only after acquiring a global lock and releasing upon completion).
Otherwise, if $t=i$, then the thread checks whether the sequence it has locally performed so far (kept in the thread-local variable lseq) is robust. 
If not, it proceeds like other threads, atomically applying the method.
If the sequence so far has been robust, the result of applying $m(d_i)$ to it is checked again for robustness.
If appending $m(d_i,d_o)$ to lseq leaves it robust, lseq is updated and $d_o$ which is the result of applying $m(d_i)$ to $\mathcal{D}$ after lseq is returned.
Otherwise, the sequence up to now is atomically applied to \texttt{O} and $t$ becomes fully synchronized.




\section{From SC to Linearizability - Forced Synchronization}
\mylabel{sec:forced-sync}
The previous section showed that the definition of sequentially consistency is too weak.
If it were to be taken as is as the correctness criterion, certain {\em broken} implementations, such as the isolated or singular implementations, would be correct.
We also know that the same implementations are not linearizable.
On the one hand the synchronization required for achieving linearizable data structures is also the culprit for non-scalable implementations.
On the other hand complete disregard for synchronization allowed by sequential consistency leaves us with pathological implementations.
In this section we propose a way to quantitatively bridge the gap from sequential consistency to linearizability.

Our idea is to limit the number of consecutive (total) events that a thread can execute before being forced to synchronize.
Let $\thrhist{h}$ be a threaded $\mathcal D$-history.
For any $t\in T$, let $e_j$ be the $i^{th}$ event of $t$ if $\history{h}(t)(i)=e_j$.
For any SC threaded $\mathcal D$-history $\thrhist{h}$, let us call a sequential threaded $\mathcal D$-history $\thrhist{s}$ a {\em serialization} of $\thrhist{h}$ if for all $t\in T$, $\history{s}(t)=\history{h}(t)$.

\begin{definition}[$k$-serial]
Let a threaded $\mathcal D$-history $\thrhist{h}$ be SC. 
Then, $\thrhist{h}$ is $k$-serial if there exists a serialization $\thrhist{s}$ of $\thrhist{h}$ such that for any $t\in T$, $e,e'\in \history{h}$ whenever $e'$ is the $i^{th}$ event of $t$ and $e'\precOrder{\history{h}}e$, then for all $j\leq i-k$ we have $\history{h}(t)(j)\abOrder{\history{s}}e$. 
An implementation is $k$-SC if all its traces are $k$-serial.
\end{definition}
Informally, a threaded history is $k$-serial if a thread cannot continue execution for more than $k$ events without synchronizing with other threads.
In SC proper, since there is no explicit requirement for synchronization, for any $k\in \mathbb{N}$ one can construct a threaded $\mathcal{D}$-history such that it is not $k$-serial as long as $\mathcal{D}$ has at least one robust sequence.
In linearizability, this bound is by definition 0; i.e. $\thrhist{h}$ is $0$-serial iff $\history{h}$ is linearizable. 
We state these results formally.
\begin{lemma}
Let $\mathcal{D}$ contain at least one robust sequence.
\begin{itemize}
\item For any $k\in\mathbb{N}$, there exists a sequence which is $k+1$-serial but not $k$-serial.
\item If $\history{h}$ is linearizable, then it is $0$-serial.
\end{itemize}
\end{lemma}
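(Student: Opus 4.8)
The plan is to handle the two bullets separately: the second is a short order-chasing argument, while the first requires an explicit witness family built from a robust sequence and one ``blocking'' event.

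For the second bullet, suppose the threaded history $\thrhist{h}$ is linearizable and let $\history{s}$ be a linearization, i.e.\ a sequential valid $\mathcal{D}$-behavior that is a permutation of $\history{h}$ with $e\precOrder{\history{h}}e'$ implying $e\abOrder{\history{s}}e'$. First I would observe that (the threaded version of) $\history{s}$ is a serialization of $\thrhist{h}$: same-thread events are totally ordered by $\precOrder{\history{h}}$, so preserving $\precOrder{\history{h}}$ forces $\history{s}(t)=\history{h}(t)$ for every $t$, and by Fact~\ref{fact:sc-lin} the history is SC so that $0$-serial applies. It then remains to check the $0$-serial condition against $\history{s}$. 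Fix $t$ and events $e,e'$ with $e'=\history{h}(t)(i)$, $e'\precOrder{\history{h}}e$, and any $j\le i$. If $j<i$ then $\history{h}(t)(j)\precOrder{\history{h}}e'$ by the per-thread order, so by transitivity of $\precOrder{\history{h}}$ we get $\history{h}(t)(j)\precOrder{\history{h}}e$; if $j=i$ this is the hypothesis. Either way linearizability gives $\history{h}(t)(j)\abOrder{\history{s}}e$ (precedence and appears-before coincide on events in a sequential history), which is exactly what $0$-serial demands.

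For the first bullet I would construct a two-thread witness $\thrhist{h}$. Robustness of $\mathcal{D}$ guarantees total events (the first event of a robust sequence $\mycompact e n$ is enabled at every state); I choose a total event $r$ together with an observation event $e$ enabled at the initial state $q_0$ but disabled at every state reachable by applying $r$ at least once. For the five data structures this pair is concrete: $r$ a $\myputx{1}$/$\myenqx{1}$/$\mypushx{1}$/nonzero write, and $e$ the empty or zero observation $\mytakex{\NULL}$, $\mydeqx{\NULL}$, $\mypopx{\NULL}$, or $\myrdx{i}{0}$. I assign $k+1$ copies $f_1\cdots f_{k+1}$ of $r$ (fresh identifiers, hence a robust sequence) to a thread $t$, and $e$ to a fresh thread $u$, with real-time shape so that $t$ completes $f_1\cdots f_{k+1}$ before $u$ starts $e$; thus $f_i\precOrder{\history{h}}e$ for all $i$. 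The history is SC via the serialization $\thrhist{s}_0=e\cdot f_1\cdots f_{k+1}$, which is valid since $e$ is enabled at $q_0$ and $f_1\cdots f_{k+1}$ is robust.

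To finish I would verify both bounds. For $(k+1)$-serial I use $\thrhist{s}_0$: the only cross-thread precedences are $f_i\precOrder{\history{h}}e$, and for the matching triples the index range $j\le i-(k+1)$ is empty because $i\le k+1$, so the requirement is vacuously met (and $u$ contributes nothing, as $e$ precedes no event). For \emph{not} $k$-serial I argue that every valid serialization places $e$ before all of $f_1,\dots,f_{k+1}$: since $u$ performs only $e$, a serialization has the form $f_1\cdots f_j\cdot e\cdot f_{j+1}\cdots$, whose validity needs $e$ enabled after $f_1\cdots f_j$, which fails for every $j\ge 1$ because $r$ has permanently disabled $e$. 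Hence $f_1$ always appears after $e$, and the triple $e'=f_{k+1}$ (so $i=k+1$), the event $e$, and $j=1\le i-k$ witnesses a violation in every serialization. I expect the main obstacle to be precisely this last persistence property --- ensuring $e$ stays disabled along every nonempty prefix so that $\thrhist{s}_0$ is (essentially) the unique valid serialization. It is immediate for the concrete structures, but in general it needs $\mathcal{D}$ to be initialized and non-trivial and for $r$ to drive the state permanently out of the region where $e$ is enabled; without such a hypothesis a degenerate robust structure (a single self-looping state) is $0$-serial throughout and the claim fails, so this structural assumption is where the real content of the first bullet resides.
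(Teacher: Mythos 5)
The paper states this lemma with no proof at all, so there is no in-paper argument to compare yours against; judged on its own, your proposal is essentially correct, and its closing observation exposes a real defect in the statement itself. Your second-bullet argument is exactly right: a linearization preserves $\precOrder{\history{h}}$, which contains every per-thread order, so it is a serialization, and the $0$-serial condition follows from transitivity of $\precOrder{\history{h}}$ plus the fact that a linearization embeds $\precOrder{\history{h}}$ into $\abOrder{\history{s}}$. Your first-bullet witness is also sound given the blocking pair $(r,e)$: with thread $t$ completing $f_1\cdots f_{k+1}$ before thread $u$ runs $e$, the $(k+1)$-serial condition is vacuous because $t$ has only $k+1$ events (every index range $j\leq i-(k+1)$ is empty), while $k$-seriality forces $f_1\abOrder{\history{s}}e$ in some valid serialization, which is impossible when $e\cdot f_1\cdots f_{k+1}$ is the only valid interleaving. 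Instantiating $r=\myputx{1}$ (resp. $\myenqx{1}$, $\mypushx{1}$, $\mywrx{1}{1}$) and $e=\mytakex{\NULL}$ (resp. $\mydeqx{\NULL}$, $\mypopx{\NULL}$, $\myrdx{1}{0}$) settles the bullet for all five data structures the paper actually considers.

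The substantive point is the one you raise at the end: the lemma's stated hypothesis --- mere existence of a robust sequence --- does not supply such a pair $(r,e)$, and the statement is false as written. Your counterexample is correct: for the one-state LTS in which every event self-loops, every sequence is robust and every sequential history is valid, so any linear extension of $\precOrder{\history{h}}$ is a valid serialization and every SC threaded history is $0$-serial; since the $k$-serial requirement weakens as $k$ grows (the range $j\leq i-k$ shrinks), $0$-serial implies $k$-serial for all $k$, and no history over this structure separates $k+1$ from $k$. The same overclaim appears in the paper's prose immediately before the lemma (``\dots as long as $\mathcal{D}$ has at least one robust sequence''). One sharpening of your own caveat: being initialized and non-trivial is still not enough for your construction --- in a two-state toggle where $r$ alternately disables and re-enables $e$, the serialization $f_1\cdot f_2\cdot e\cdot f_3\cdots f_{k+1}$ is valid and your witness history becomes $k$-serial for $k\geq 1$ --- so the load-bearing hypothesis is precisely the persistence property you name: $e$ must stay disabled after \emph{every} nonempty prefix of $r^{k+1}$. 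That persistence (which all five concrete structures satisfy) is the honest hypothesis under which the first bullet is a theorem.
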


\setalgorithmicfont{\footnotesize}
\begin{wrapfigure}{L}{0.35\textwidth}
 \begin{minipage}{0.35\textwidth}
 \tiny{
  \begin{algorithm}[H]
   \caption{$d_o=m(d_i)$}
\begin{algorithmic}
\IF {self=$i$}
 \STATE $d_o \gets \mathtt{Obj}$[self].$m$($d_i$);
 \STATE newseq $\gets$ lseq$\cdot m$($d_i$,$d_o$);
 \IF {notRobust(lseq) \fbox{$\vee$ cnt $\geq k$}}
  \STATE atomic <commit(lseq)>;
  \STATE atomic <$d_o \gets$ O.$m$($d_i$)>;
  \STATE lseq $\gets\varepsilon$; \fbox{cnt=0;}
 \ELSE 
  \STATE lseq $\gets$ newseq; \fbox{cnt++;}
 \ENDIF
\ELSE 
 \STATE atomic <$d_o \gets$ O.$m$($d_i$)>;
\ENDIF
\RETURN $d_o$;
\end{algorithmic}
  \end{algorithm}
  }
 \end{minipage}
\caption{The template for $m(d_i,d_o)$ in $k$-SC.}
\mylabel{fig:k-closed}
\end{wrapfigure}
It is straightforward to implement $k$-SC data structures by modifying the singular implementations given in the previous section (modifications shown by the boxed code of Fig.~\ref{fig:k-closed}).
Events are performed locally without synchronization as long as the the sequence so far has been robust and its length is less than $k$ (the additional disjunct {\small \fbox{cnt $\geq k$}}).
Once either of the conditions is violated, the effects of all events seen so far are committed to the shared data structure.
Until then, the local sequence and its length is updated (the increment {\small \fbox{cnt++}}).
Observe that if $k$ is taken to be 0, the additional disjunct will always evaluate to true, forcing synchronization at each call, thereby guaranteeing linearizability.
%




\section{Conclusion}
\mylabel{sec:conclusion}
We have shown that sequential consistency despite its appeal is too weak to be used as an alternative to linearizability in specifying concurrent data structure correctness.
For almost all well-known data structures, sequentially consistent implementations thereof can have undesirable behavior.
For instance, it is possible for a thread in a sequentially consistent queue implementation to observe the queue as empty regardless of what the other threads are doing.

As a first step to bridge the gap between sequentially consistent and linearizable implementations, we also propose a quantitative constraint to capture implementations that lie between the two consistency conditions.
In a $k$-SC implementation, a thread is allowed to proceed without synchronization only for a determined number of consecutive events after which it is required to synchronize. 

One possible future work is the development of concrete data structures that are $k$-SC and investigate the relation between particular values of $k$ and some notion of overall progress.
Another possibility is to check whether a similar strengthening of other consistency conditions either weaker than (memory models of modern processors, such as x86 or ARM) or incomparable to (e.g. quiescent consistency) sequential consistency is useful.


\bibliographystyle{plain}
\bibliography{sc-datastructures}

\end{document}